%% file: main.tex
\documentclass[letterpaper, 10 pt, conference]{ieeeconf}
\IEEEoverridecommandlockouts   
\usepackage{max_paper}

\title{\LARGE \bf
Constructive Safety-Critical Control for a Class of Underactuated Systems: A Hierarchical Approach}

\author{Massimiliano de Sa and Aaron D. Ames%
\thanks{This research is supported by AFOSR Grant No. 113535-19668, Hybrid Dynamics - Deconstruction and Aggregation (HyDDRA).}%
\thanks{The authors are with the Department of Control and Dynamical Systems, California Institute of Technology, Pasadena CA 91125, U.S.A.
        {\tt\small \{mdesa, ames\}@caltech.edu}}%
}

\begin{document}

\maketitle
\thispagestyle{empty}
\pagestyle{empty}
\renewcommand{\bf}{\mathbf{f}}

\begin{abstract}

Underactuated systems with nontrivial geometry are abundant in practical robotic problems. However, little is known about constructive safety-critical control design in the presence of underactuation. In this work, we provide a constructive framework for synthesizing safe control architectures and control barrier functions for a class of quadrotor-like underactuated systems. By studying the geometry of their underactuation, we characterize when these systems have natural layered architectures useful for safety-critical control design. Using these architectures, we devise a constructive, hierarchical CBF synthesis procedure and provide conditions under which it succeeds.

\end{abstract}

\input{intro}
\input{prelim}
\input{actuation}
\input{cbf-synth}

\input{conclusion}


\bibstyle{ieeetr}
\bibliography{references}

\end{document}

%% file: intro.tex
\section{Introduction}
Underactuated mechanical systems---systems with fewer degrees of actuation than degrees of freedom---comprise an important class of control systems ubiquitous in modern robotic problems. Due to their underspecified nature, underactuated systems are substantially more challenging to control than their fully-actuated counterparts. Furthermore, numerous important underactuated control systems---for instance, the quadrotor---have manifold-valued states, compelling controller designs to contend with both underactuation \textit{and} nontrivial state space geometry.

In practice, such systems are subject to a host of safety constraints; quadrotors, for instance, must avoid collisions with their environment. \textit{Control barrier functions (CBFs)} have been established as a leading framework for the specification of safety constraints and design of safety-critical controllers \cite{ames2016control, ames2019control}. To leverage this framework, and the host of safety-critical control design techniques it enables, one must first construct a control barrier function. Though the process of synthesizing CBFs for fully-actuated systems is relatively well understood, less is known about the underactuated case.

Recently, a \textit{hierarchical} perspective on synthesizing CBFs for complex, potentially underactuated systems has emerged \cite{molnar2021model, cohen2024safety, cohen2025safety, yamaguchi2026layered, matni2024quantitative}. Within this paradigm, one first designs a CBF for a reduced-order model, a \textit{virtual} model for which CBF design is simple. Then, one assumes there exists a controller for the full-order system that tracks the inputs to the reduced-order model, as well as a Lyapunov certificate of tracking. Under various assumptions (e.g. sufficiently fast tracking, bounded full-to-reduced-order model error), the Lyapunov certificate can be combined with a CBF for the reduced-order model to produce a CBF for the full system.

\begin{figure}
    \centering
    \includegraphics[width=0.97\linewidth]{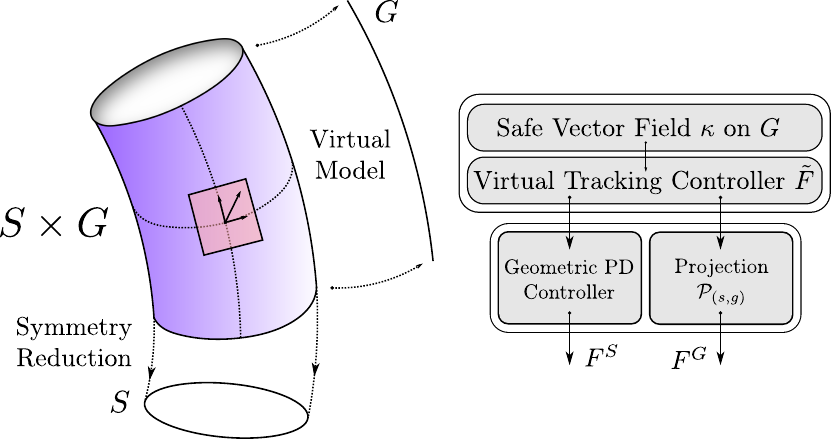}
    \caption{For a class of quadrotor-like underactuated systems, symmetries enable natural layered control architectures that can be used to synthesize closed-form CBFs.}
    \label{fig:reduction}
    \vspace{-10pt}
\end{figure}

Although these hierarchical techniques have demonstrated considerable success as a framework, little is known about hierarchies from the constructive perspective. Indeed, given a complex, underactuated system, it is unknown how to choose a reduced-order model satisfying the requisite tracking and boundedness conditions (nor whether one exists), nor how to find the tracking Lyapunov certificate required for CBF construction (nor whether one exists). Further, the nontrivial role of geometry---which poses fundamental obstructions to tracking control \cite{koditschek1989application, bullo1999tracking}---has yet to be examined.

To better understand when these unknowns have concrete solutions, we perform a geometric study of layered control architectures. We are driven by two principal directions. On one front, we aim to determine how and when a layered architecture can be designed. Here, we ask what conditions enable the existence of ``natural" layered architectures, and how inputs to reduced-order models can be transformed into references for the full-order system to track. On another front, given an architecture in which a safe controller for a reduced-order model is tracked by a controller for the full system, we ask how and when safety certificates for the full system can be assembled in closed form.

In pursuing these two directions, we focus on a class of quadrotor-like geometric mechanical control systems, for which concrete answers are available and interpretable. By leveraging symmetries of these systems, and drawing analogies to existing quadrotor control architectures \cite{lee2010control}, we characterize when natural choices of architectures exist for these systems and when they can be certified by constructive, closed-form control barrier functions.

Our main contributions are summarized as follows:
\begin{enumerate}
    \item Develop and characterize \textit{shape maps} as a means of dealing with underactuation and transforming virtual control inputs into references for underactuated systems.
    \item Use symmetries and shape maps to design layered control architectures for a class of underactuated systems.
    \item Apply these layered architectures to devise a constructive, hierarchical CBF synthesis procedure for a class of underactuated systems.
\end{enumerate}
The rest of the paper is organized as follows. In Section II, we review the necessary preliminaries. In Section III, we discuss architectures, shape maps, and their existence. In Section IV, we discuss CBF synthesis, and in Section V, we provide concluding remarks.

%% file: prelim.tex
\section{Preliminaries}
\subsection{Differential Geometry}
First, we fix basic geometric notation. All of our definitions are standard, following \cite{lee2012smooth, lee2018introduction}, and we shall develop concepts in further detail as the need arises. Throughout, we assume all objects to be \textit{smooth} ($C^\infty$) for brevity.

Let $Q$ be a smooth $n$-manifold. Given a tangent vector $v_q \in T_q Q$ and a cotangent vector $\omega_q \in T_q^*Q$ (the dual space to $T_qQ$), we write $\braket{\omega_q; v_q}$ for the action of $\omega_q$ on $v_q$. We write $C^\infty(Q)$ for the set of smooth functions $h: Q \to \R$, and $dh_q \in T_q^*Q$ for the differential of $h$ at $q$. For $\psi \in C^\infty(Q \times Q)$, we write $d_1 \psi$ and $d_2 \psi$ for differentials in the first and second arguments. For a smooth map $F: Q_1 \to Q_2$, we write $dF_q$ for its differential (analogous to the Jacobian) and $dF_q^*$ for its adjoint. We write $\X(Q)$ for the set of smooth vector fields on $Q$. For $X \in \X(Q)$, $X_q$ is its value at $q$. A rank-$m$ codistribution $\F$ on $Q$ is a smooth assignment of each $q \in Q$ to an $m$-dimensional subspace $\F_q$ of $T_q^*Q$.

We write $\braket{\cdot, \cdot}$ for a Riemannian metric on $Q$ and $\norm{\cdot}_Q$ for its induced norms on $T_qQ$ and $T_q^*Q$. A metric induces isomorphisms $\sharp$ and $\flat$ taking covectors to vectors and vice versa. The gradient of $h \in C^\infty(Q)$ at $q$ is $\grad h|_q = (dh_q)^\sharp$.

A Lie group $G$ is a group with a smooth manifold structure. We write $L_h: G \to G$ for left translation on the Lie group, $L_h(g) = h \cdot g$, and $e \in G$ for its identity element. The Lie algebra $\g$ of $G$ is its tangent space at the identity, $\g = T_e G$. The dual space to $\g$ is denoted $\g^*$.

\subsection{Simple Mechanical Control Systems}
Lagrangian control systems on $\R^n$ are generalized to a geometric setting within the following framework \cite{bullo2005geometric}.
\begin{defn}\label{defn:smcs}
    A \textit{simple mechanical control system (SMCS)} is a tuple $\Sigma = (Q, \braket{\cdot, \cdot}, V, \F)$ of an $n$-manifold $Q$ (the \textit{configuration manifold}), a Riemannian metric $\braket{\cdot, \cdot}$ on $Q$ (the \textit{kinetic energy metric}), $V \in C^\infty(Q)$ (the \textit{potential function}), and a rank-$m$ codistribution $\F$ (the \textit{input codistribution}).
\end{defn}
\begin{defn}\label{defn:controller}
    A \textit{controller} for a SMCS $\Sigma$ is a map $F: TQ \to \F$ with $F(v_q) \in \F_{q}$ for all $v_q \in TQ$.
\end{defn}
The equations of motion of a SMCS with controller $F$ are:
\begin{align}
    \nabla_{\dot q} \dot q &= - \grad V|_q + F(\dot q)^\sharp. \label{eq:smcs-eom}
\end{align}
Here, $\nabla$ is the \textit{Levi-Civita connection} induced by the metric, an object that enables the computation of directional derivatives, termed \textit{covariant derivatives}, on $Q$. In \eqref{eq:smcs-eom}, $\nabla_{\dot q} \dot q$ is interpreted as the acceleration of $q$. Provided $F$ is locally Lipschitz, \eqref{eq:smcs-eom} has a unique maximal trajectory $q: I(v_q) \to Q$ for each $v_q \in TQ$, a curve satisfying $\dot q(0) = v_q$ and \eqref{eq:smcs-eom} for all times in its maximal interval of existence $I(v_q) \subseteq \R$.

\subsection{Error Functions \& Transport Maps}
To design trajectory tracking controllers for a SMCS, one requires notions of \textit{errors} in configuration and velocity. Configuration error is encoded by the following \cite{bullo2005geometric}.
\begin{defn}\label{defn:config-error-func}
    Let $Q$ be a Riemannian manifold. A smooth map $\psi: Q \times Q \to \R_{\geq 0}$ is a \textit{configuration error function} if:
    \begin{enumerate}
        \item $\psi(q, r) = \psi(r, q) \; \forall q, r \in Q$ and $\psi(q, r) = 0 \Leftrightarrow q = r$.
        \item $\psi_r: q \mapsto \psi(q, r)$ is proper\footnote{A function is \textit{proper} if its preimage of every compact set is compact.} for all $r \in Q$.
        \item $d_1 \psi_{(r, r)} = 0_r$ for all $r \in Q$.
        \item $\Hess_1 \psi|_{(r, r)}$ is positive definite for all $r \in Q$, for $\Hess_1 \psi$ the Hessian\footnote{The Hessian of $h \in C^\infty(Q)$ at $q$ is the symmetric, bilinear form $\Hess h|_q : T_q Q \times T_q Q \to \R$, $\Hess h|_q(v_q, w_q) = \braket{v_q, \nabla_{w_q}\grad h|_q}$.} of $\psi$ in its first argument.
    \end{enumerate}
    Further, $\psi$ is $L$-\textit{uniformly quadratic} with $L > 0$ if $\exists \; b_1, b_2 > 0$ such that for all $r \in Q$ and $q \in \psi_r^{-1}(\R_{\leq L})$,
    \begin{align}
        b_1 \norm{d_1 \psi_{(q, r)}}_Q^2 \leq \psi(q, r) \leq b_2\norm{d_1 \psi_{(q, r)}}_Q^2. \label{eq:unif-quadr}
    \end{align}
\end{defn}
Velocity error is defined through a \textit{transport map}, which enables the comparison of vectors in different tangent spaces.
\begin{defn}
    A \textit{transport map} $\T$ is a collection of maps $\T_{(q, r)}$ satisfying the following properties:
    \begin{enumerate}
        \item For all $q, r \in Q$, $\T_{(q, r)}: T_r Q \to T_q Q$ is a linear map.
        \item For all $r \in Q$, $\T_{(r, r)} = \id_{T_rQ}$.
        \item For each $w_r \in TQ$, $Y: q \mapsto \T_{(q, r)}w_r$ belongs to $\X(Q)$. 
    \end{enumerate}
    Given a configuration error function $\psi$ on $Q$, $\T$ is \textit{compatible} with $\psi$ if $d_2 \psi_{(q, r)} = -\T_{(q, r)}^*(d_1 \psi_{(q, r)})$ for all $q, r \in Q$.
\end{defn}
We define the \textit{covariant derivative of $\T$} as $\nabla \T_{(q, r)}: T_q Q \times T_r Q \to T_q Q$, $\nabla \T_{(q, r)}(v_q, w_r) = \nabla_{v_q} (\T_{(q, r)}w_r)$.
\begin{example}\label{ex:so3-config-error}
    On $SO(3)$, the space of 3D rotation matrices, $\psi(R_1, R_2) = \frac{1}{2}\tr(I - R_2^\top R_1)$ is a configuration error function. Writing tangent vectors in $T_RSO(3)$ as $R \hat \Omega$, for $\wedge: \R^3 \to \so(3)$ the hat map, a compatible transport map is given by $\T_{(R_1, R_2)}(R_2 \hat \Omega_2) = R_2 \hat \Omega_2 R_2^\top R_1$  \cite[p. 554]{bullo2005geometric}.
\end{example}
\subsection{Control Barrier Functions}
In the \textit{control barrier function (CBF)} framework, one aims to design a controller to keep a system within a set $\C$ of safe states. Such control design is achievable when $h$ is a CBF; we now recall the definition of a CBF for a SMCS \cite{de2026bundles}.
\begin{defn}
    Let $\Sigma$ be a SMCS and $D$ be an open subset of $TQ$. A function $h \in C^\infty(D)$ is a \textit{control barrier function (CBF)} for $\Sigma$ on $D$ with safe set $\C = h^{-1}(\R_{\geq 0})$ if there is a function\footnote{A function $\alpha: \R \to \R$ belongs to $\K_\infty^e$ if it is continuous, strictly increasing, and satisfies $\alpha(0) = 0$, $\lim_{r \to \pm \infty} \alpha(r) = \pm \infty$.} $\alpha \in \K_\infty^e$ for which
    \begin{align}
        \sup_{F_q \in \F_q} \dot h(v_q, F_q) > -\alpha(h(v_q)), \quad \forall v_q \in \C, \label{eq:cbf-defn}
    \end{align}
    where $\dot h$ is the Lie derivative of $h$ at $(v_q, F_q)$ subject to \eqref{eq:smcs-eom}.
\end{defn}
Given a CBF $h$ for a SMCS $\Sigma$ and a nominal, potentially unsafe controller $ F_d : TQ \to \F$, the CBF-QP \textit{safety filter} is
\begin{align}
    F_{\text{QP}}(v_q) =& \argmin_{F_q \in \F_q} \lVert{F_q - F_d(v_q)}\rVert_Q^2 \label{eq:cbf-qp} \\
     \text{ s.t. }& \dot h(v_q, F_q) \geq -\alpha(h(v_q)). \nonumber
\end{align}
For locally Lipschitz $F_d$, $\alpha$, this produces a locally Lipschitz control law on $\C$ that minimally modifies a nominal input $F_d$ to a safe input $F_{\text{QP}}$. For any initial $v_q \in \C$, the resulting trajectory of $\Sigma$ under $F_{\text{QP}}$ remains in $\C$ for all positive times in its maximal interval of existence, thereby achieving safety. Further, it is possible to construct smooth approximations of \eqref{eq:cbf-qp}, termed \textit{smooth safety filters}, in closed-form \cite{cohen2023characterizing}.

%% file: actuation.tex
\section{Architectures \& Actuation Geometry}
In the literature, control barrier functions for complex, potentially underactuated systems are often constructed \textit{hierarchically} from reduced-order models. For mechanical systems, for instance, hierarchical CBF synthesis typically proceeds as follows. Given a SMCS on a configuration manifold $Q$, a set of safe configurations $\C_0 \subset G$ is specified, where $G$ is a smaller manifold related to $Q$ by a projection $\pi: Q \to G$. One designs a CBF for $\Sigma$ with safe set $\C$ such that $v_q \in \C$ implies $q$ is safe with respect to $\C_0$, satisfying $\pi(q) \in \C_0$.

The design procedure is as follows. On $G$, choose a kinematic reduced-order model, a vector field $\kappa$ that is safe with respect to $\C_0$. Assuming that the full mechanical system can track $\kappa$ in a certifiable manner, the certificate of tracking can be combined with a certificate of safety for $\kappa$ to produce the desired CBF for the full system \cite{molnar2021model, cohen2025safety, de2026bundles}.

Though this hierarchical process---determining a safe vector field, tracking it, and finding certificates of tracking and safety---is well understood for \textit{fully-actuated} systems, comparatively little is known about the underactuated case. This leads to the following motivating problem of this work.
\begin{prob}\label{prob:cbf-constr}
    Let $\Sigma$ be an underactuated SMCS on $Q$ and $\pi: Q \to G$ be a projection to a manifold $G$. Given a safe configuration set $\C_0 \subset G$, and a safe vector field $\kappa$ on $G$, 
    \begin{enumerate}
        \item How and when can a $\kappa$-tracking controller be designed and certified by a closed-form tracking certificate?
        \item When can the tracking certificate be used to synthesize a CBF for the safe configuration set $\tilde \C_0 = \pi^{-1}(\C_0)$?
    \end{enumerate}
\end{prob}
To gain insight into this problem, we begin by examining one of the most important underactuated control systems: the quadrotor on $SE(3)$. Given a set of safe positions $\C_0 \subset \R^3$, we begin by choosing a safe vector field $\kappa$ on $\R^3$ for $\C_0$. To track $\kappa$, a popular strategy is to use yet another hierarchical control design, as visualized in Fig. \ref{fig:qrotor} \cite{lee2010control}. First, one designs a force controller for a \textit{virtual} system, a fully-actuated point mass model on $\R^3$; for each $v_x \in T\R^3$, this produces a desired tracking force $f_d(v_x)$. Recognizing that the quadrotor can only exert $f_d(v_x)$ if it aligns $R e_3$ with $f_d(v_x)$, one computes a desired orientation $R_d \in SO(3)$, with $R_d e_3$ aligned with $f_d(v_x)$. The final control law is computed by projecting $f_d(v_x)$ onto $Re_3$ and tracking the desired orientation $R_d$. Provided this tracking is sufficiently fast, this design enables stable tracking of $\kappa$ with a closed-form certificate of tracking \cite{lee2010control}.
\begin{figure}
    \centering
    \includegraphics[width=0.8\linewidth]{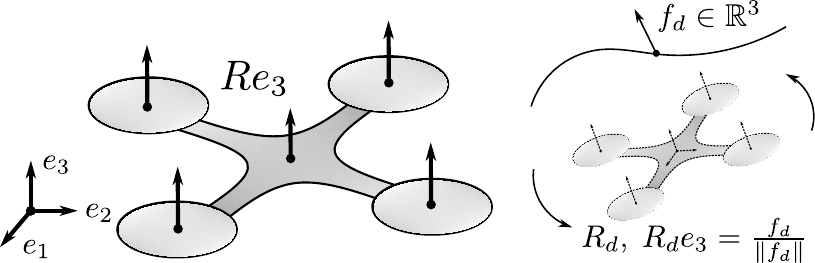}
    \caption{A quadrotor can only apply thrust in the $Re_3$-direction; to track $f_d \in \R^3$, one must \textit{reorient} the quadrotor to align $Re_3$ with $f_d$.}
    \label{fig:qrotor}
    \vspace{-10pt}
\end{figure}

In this section, starting from the key lessons taught by the quadrotor architecture, we address the inverse problem of identifying the key structural aspects which make the quadrotor $\kappa$-tracking architecture possible.

\subsection{Trivial Simple Mechanical Control Systems}
To begin, we seek a class of systems that admit simple dynamical decompositions and virtual models, analogous to the quadrotor. The existence of decompositions of the quadrotor can be traced to an important property: \textit{symmetry}. Consider the Lie group $(\R^3, +)$ and the Lie group action\footnote{A \textit{Lie group action} is a smooth family of transformations parameterized by a Lie group; we refer to \cite[Ch. 5]{bullo2005geometric} for further details.} $\Phi_h: SE(3) \to SE(3)$ of \textit{left translation} by $h \in \R^3$, $\Phi_h((R, x)) = (R, L_h(x)) = (R, h + x)$. For each $h \in \R^3$, both the kinetic energy metric and input codistribution of the quadrotor are \textit{invariant} under the translation $(R, x) \mapsto (R, h + x)$; thus, both have a \textit{translational symmetry}.

This setting of a SMCS on a configuration space $Q = S \times G$, for $S$ a smooth manifold and $G$ a Lie group, with objects invariant under the group action $\Phi_h(s, g) = (s, L_h(g))$, leads to a general class of quadrotor-like underactuated systems.  Below, we employ the standard identifications $T_{(s, g)}(S \times G) \cong T_s S \oplus T_gG$ and $T_{(s, g)}^*(S \times G) \cong T_s^*S \oplus T_g^*G$.
\begin{defn}\label{defn:triv-smcs}
    A \textit{trivial SMCS} $\Sigma = (Q, \braket{\cdot, \cdot}, V, \F)$ satisfies:
    \begin{enumerate}
        \item $Q = S \times G$, for $S$ a manifold called the \textit{shape space} and $G$ a Lie group called the \textit{symmetry group}.
        \item $\braket{(v_s, v_g), (w_s, w_g)} = \braket{v_s, w_s}_S + \braket{v_g, w_g}_G$, for $\braket{\cdot, \cdot}_S$ a metric on $S$ and $\braket{\cdot, \cdot}_G$ a left-invariant metric on $G$.
        \item $V(s, g) = V_{G}(g)$ for some $V_G \in C^\infty(G)$.
        \item $\F$ is invariant under $\Phi_h: (s, g) \mapsto (s, L_h(g))$:
        \begin{align*}
            d(\Phi_h)_{(s, g)}^* \F_{(s, hg)} = \F_{(s, g)}, 
        \end{align*}
        and $T_s^*S \oplus \{0_g\} \subseteq \F_{(s, g)}$, for all $s \in S$ and $g, h \in G$.
    \end{enumerate}
\end{defn}
Thus, as is the case for the quadrotor, the underactuation of a trivial SMCS is concentrated in $G$, while $S$ is fully-actuated. The equations of motion of a trivial SMCS neatly decompose into a $G$-subsystem $\Sigma_G$ and an $S$-subsystem $\Sigma_S$:
\begin{align*}
    \Sigma_G: \nabla_{\dot g}^G \dot g &= - \grad V_G|_g + (F_g^G)^\sharp \\
    \Sigma_S: \nabla_{\dot s}^S \dot s &= (F_s^S)^\sharp,
\end{align*}
where $(F_s^S, F_g^G) \in \F_{(s, g)}$ and $\nabla^G, \nabla^S$ are the Levi-Civita connections on $G$ and $S$. Critically, the only coupling between $\Sigma_G$ and $\Sigma_S$ is through the structure of the force codistribution $\F$. Though this is certainly \textit{not} the case for general underactuated systems, we will find that this structure enables precise insights into underactuated control. With this in mind, we define a \textit{virtual model}. 
\begin{defn}
    The \textit{virtual model} of a trivial SMCS $\Sigma$ is the SMCS $\tilde \Sigma_G = (G, \braket{\cdot, \cdot}_G, V_G, T^*G)$.
\end{defn}
\begin{example}\label{ex:qrotor-triv-smcs}
    The quadrotor on $SE(3)$ is a trivial SMCS with:
    \begin{enumerate}
        \item \textit{Configuration manifold}: $S = SO(3)$ and $ G = \R^3$.
        \item \textit{Metric}: for $m$ the quadrotor's mass and $J \in \R^{3\times 3}$ its inertia\footnote{We identify tangent vectors in $T_x \R^3$ and $T_R SO(3)$ with vectors in $\R^3$ and matrices $R \hat \Omega, \; \Omega \in \R^3$, respectively.}, $\braket{(R \hat \Omega_1, v), (R \hat \Omega_2, w)} = \Omega_1^\top J \Omega_2 + mv^\top w$.
        \item \textit{Potential}: for $x \in \R^3$, $V_G(x) = mg e_3^\top x$.
        \item \textit{Force codistribution}: for $(R, x) \in SE(3)$ and $e_3 = (0, 0, 1)$, $\F_{(R, x)} = T_R^* SO(3) \oplus \spn(Re_3)^\flat$.
    \end{enumerate}
    The virtual model of this system is $(\R^3, \braket{\cdot, \cdot}_G, V_G, T^* \R^3)$, a fully-actuated point mass with gravity.
\end{example}

\subsection{Shape Maps}
Trivial SMCS contribute the first key component of the quadrotor's hierarchical control design: a \textit{virtual} model. The second component enabling this design was a map from virtual forces $f_d$ to orientations in $SO(3)$ at which $f_d$ belongs to the force codistribution. In the trivial SMCS setting, this amounts to finding a map from a force $F_g \in T^* G$ to a shape $s \in S$ at which $F_g$ belongs to the input codistribution. We term this object a \textit{shape map}. To give a formal definition, we require the following additional notation. Let $\F_{(s, g)}^G$ be the $G$-component of $\F_{(s, g)}$, the unique subspace of $T_g^*G$ for which $\F_{(s, g)} = T_s^*S \oplus \F_{{(s, g)}}^G$, $\Pc_{(s, g)}$ be the orthogonal projection from $T_g^*G$ to $\F^G_{(s, g)}$, and $Z^*(G)$ the set of all zero covectors in $T^*G$.
\begin{defn}\label{defn:shape-map}
    A \textit{local shape map} for a trivial SMCS $\Sigma$ is a smooth map $\Sc: U \to S$, for $U \subseteq T^*G \setminus Z^*(G)$ open, with:
    \begin{enumerate}
        \item \textit{Actuation}: for all $g \in G$ and $F_g \in U$, $F_g \in \F^G_{(\Sc(F_g), g)}$.
        \item \textit{Scale-invariance}: for all $\lambda > 0$, $g \in G$, and $F_g \in U$,
        \begin{align*}
            \lambda F_g \in U \text{ and } \Sc(F_g) = \Sc(\lambda F_g).
        \end{align*}
        \item \textit{Translation-invariance}: for all $h, g \in G$ and $F_{hg} \in U$,
        \begin{align*}
            d(L_h)_{g}^* F_{hg} \in U \text{ and } \Sc(F_{hg}) = \Sc( d(L_h)_{g}^* F_{hg}).
        \end{align*}
    \end{enumerate}
    If $U = T^*G \setminus Z^*(G)$, we refer to $\Sc$ as a \textit{shape map.} If $\psi$ is a configuration error function on $S$, then $\Sc$ is \textit{$L$-compatible}\footnote{As $\psi$ is locally quadratic, $L$-compatibility is locally a mild condition.} with $\psi$ if $\exists \; L, C > 0$ for which $\psi(s, \Sc(F_g)) \leq L$ implies:
    \begin{align*}
        \norm{\Pc_{(s, g)}\Big(\tfrac{F_g}{\norm{F_g}_G}\Big) - \tfrac{F_g}{\norm{F_g}_G}}_G^2 &\leq C \psi(s, \Sc(F_g)).
    \end{align*}
\end{defn}
Thus, a shape map simply takes in a force direction, specified by a nonzero covector, and returns a shape at which it belongs to the force codistribution. $L$-compatibility, which controls force projection error in terms of a configuration error, will later prove valuable in control design.

Given the role that shape maps play in the quadrotor's hierarchical design, we now seek conditions for their existence. First, using the symmetry of $\F$, we reduce the search for shape maps to a search over a simpler class of functions.
\begin{lem}\label{lem:reduce-struct}
    Let $\Sigma$ be a trivial SMCS, $\af_s = \F_{(s, e)}^G \subseteq \g^*$, and $\sph^{\g^*} = \{ \omega \in \g^* : \norm{\omega}_G = 1\}$ the unit sphere in $\g^*$. Then:    \begin{enumerate}
        \item For each $(s, g) \in S \times G$, $\F_{(s, g)}^G = d(L_{g^{-1}})_{g}^*\af_s$.
        \item Let $\mathfrak U$ be an open subset of $\sph^{\g^*}$. For any smooth $\s: \mathfrak U \subseteq \sph^{\g^*} \to S$ satisfying $f \in \af_{\s(f)}$ for all $f \in \mathfrak U$, the map
        \begin{align*}
            \Sc : F_g \mapsto \s\Big(d(L_g)_{e}^*\tfrac{F_g}{\norm{F_g}_G}\Big)
        \end{align*}
        is a local shape map on $U = \bigcup_{g \in G, \lambda > 0} d(L_{g^{-1}})_{g}^*(\lambda \mathfrak U)$. Conversely, every local shape map is of this form.
    \end{enumerate}
\end{lem}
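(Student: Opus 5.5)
For part 1, I would use the invariance condition in Definition~\ref{defn:triv-smcs}. The action $\Phi_h(s,g) = (s, L_h(g))$ is the identity on the $S$-factor, so under the identification $T^*_{(s,g)}Q \cong T_s^*S \oplus T_g^*G$ its differential splits as $d(\Phi_h)_{(s,g)} = \id_{T_sS} \oplus d(L_h)_g$. The adjoint splits the same way, so
\begin{align*}
    d(\Phi_h)_{(s,g)}^*\big(T_s^*S \oplus \F^G_{(s,hg)}\big) = T_s^*S \oplus d(L_h)_g^*\F^G_{(s,hg)}.
\end{align*}
Invariance then gives $d(L_h)_g^*\F^G_{(s,hg)} = \F^G_{(s,g)}$, because the $G$-component is unique. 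I would set $g = e$ and $h = g$ to get $d(L_g)_e^*\F^G_{(s,g)} = \af_s$. Since $L_{g^{-1}}\circ L_g = \id$, the chain rule gives $(d(L_g)_e^*)^{-1} = d(L_{g^{-1}})_g^*$, and applying this yields $\F^G_{(s,g)} = d(L_{g^{-1}})_g^*\af_s$.

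For the forward direction of part 2, I would first note a consequence of left-invariance of $\braket{\cdot,\cdot}_G$: each $d(L_g)_e$ is an isometry $\g \to T_gG$, so $d(L_g)_e^*$ is an isometry $T_g^*G \to \g^*$. Given $F_g \in U$, write $F_g = d(L_{g^{-1}})_g^*(\lambda f)$ with $f \in \mathfrak U$ and $\lambda > 0$. Then $d(L_g)_e^*F_g = \lambda f$, and since $\norm{f}_G = 1$ this forces $\lambda = \norm{F_g}_G$, so $\Sc(F_g) = \s(f)$ is well defined.

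\textbf{Actuation.} We have $f \in \af_{\s(f)}$, so $F_g \in d(L_{g^{-1}})_g^*\af_{\s(f)} = \F^G_{(\Sc(F_g),g)}$ by part 1.

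\textbf{Scale invariance.} Both $U$ and the normalized argument of $\s$ are unchanged under $F_g \mapsto \lambda F_g$ for $\lambda > 0$.

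\textbf{Translation invariance.} For $F_{hg} \in T^*_{hg}G$, the identity $L_{hg} = L_h \circ L_g$ gives $d(L_g)_e^* d(L_h)_g^* F_{hg} = d(L_{hg})_e^* F_{hg}$. So the two points have the same normalized image in $\g^*$, using the isometry property for the norms. This shows both membership in $U$ and equality of $\Sc$.

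\textbf{Openness and smoothness.} The map $\Theta: T^*G\setminus Z^*(G) \to \sph^{\g^*}$, $F_g \mapsto d(L_g)_e^*F_g/\norm{F_g}_G$, is smooth: through the left trivialization $T^*G \cong G\times\g^*$ it is just normalization of the $\g^*$-component. Then $U = \Theta^{-1}(\mathfrak U)$, so $U$ is open, and $\Sc = \s\circ\Theta$ is smooth.

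For the converse, given a local shape map $\Sc$ on $U$, I would define $\mathfrak U = U \cap \sph^{\g^*}$, viewing $\g^* = T_e^*G$, and $\s = \Sc|_{\mathfrak U}$.

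\textbf{Openness of $\mathfrak U$.} By scale- and translation-invariance, $F_g \in U$ if and only if $\Theta(F_g) \in U$. Hence $U = \Theta^{-1}(\mathfrak U)$, and $\mathfrak U = \Theta(U)$. Because $\Theta$ is an open map (a submersion, again by the trivialization), $\mathfrak U$ is open in the sphere.

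\textbf{Smoothness and actuation.} $\s$ is smooth as the restriction of $\Sc$ to an embedded submanifold. The actuation property at $g = e$ gives $f \in \af_{\s(f)}$.

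\textbf{Recovering $\Sc$.} The two invariances give $\Sc = \s\circ\Theta$, and $\Theta^{-1}(\mathfrak U)$ equals the union defining $U$.

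The main obstacle I expect is the bookkeeping of pullbacks: getting the directions of $d(L_h)^*$ right and checking the norm-preservation that makes the normalization commute with translation. The trivialization handles smoothness and openness cleanly.
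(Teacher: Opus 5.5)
Your proposal is correct and follows the same route as the paper's proof: part 1 from the invariance of $\F$, the forward direction of part 2 by writing $F_g = \norm{F_g}_G\, d(L_{g^{-1}})_g^* f$ and applying part 1, and the converse by restricting $\Sc$ to $\mathfrak U = U \cap \sph^{\g^*}$. You also make explicit details the paper leaves as ``by construction'': left-invariance of the metric makes $d(L_g)_e^*$ an isometry, which is needed for the normalized covector to lie in $\sph^{\g^*}$ and for normalization to commute with translation, and the left trivialization gives openness of $U$ and smoothness of $\Sc$.
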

\begin{proof}
    Item (1) follows immediately from invariance of $\F$. Next, we prove (2). Suppose $\s : \Uf \to S$ satisfies $f \in \af_{\s(f)}$ for all $f \in \Uf$. First, we note that the domain $U$ is an open subset of $T^*G \setminus Z^*(G)$. Now, we verify that the induced $\Sc$ satisfies the actuation property. Let $F_g \in U$. For $f = d(L_g)_{e}^* \tfrac{F_g}{\norm{F_g}_G}$, we have that $f \in \Uf$ and that $F_g = \norm{F_g}_G d(L_{g^{-1}})_{g}^* f$. Thus, using that $f \in \af_{\s(f)}$ and $\F_{(s, g)}^G = d(L_{g^{-1}})_{g}^*\af_s$, we find:
    \begin{align*}
        F_g &= \norm{F_g}_G d(L_{g^{-1}})_{g}^* f \in d(L_{g^{-1}})_{g}^* \af_{\s(f)} = \F_{(\s(f), g)}^G.
    \end{align*}
    Since $\Sc(F_g) = \s(f)$, we conclude that $F_g \in \F_{(\Sc(F_g), g)}^G$. The scale and translation-invariance properties of $\Sc$ follow by construction of $\Sc$ and $U$. To prove the converse of item (2), given a local shape map $\Sc: U \to S$, we define $\s$ by
    \begin{align*}
        \s(f) = \Sc(f), \; \forall f \in \Uf = U \cap \sph^{\g^*}.
    \end{align*}
    This map $\s$ induces $\Sc$ in the same manner as above.
\end{proof}

We shall call maps of the form of $\s$ \textit{local Lie shape maps}. Lemma \ref{lem:reduce-struct} reduces the question of existence of local shape maps to that of local Lie shape maps. To establish conditions for the existence of these maps, our next step is to assemble the unit vectors of the $\af_s$ subspaces into a smooth manifold.
\begin{lem}\label{lem:A-defn}
    For a trivial SMCS $\Sigma$, $\Ac := \bigcup_{s \in S} \{s\} \times (\af_s \cap \sph^{\g^*})$
    is a smooth, embedded submanifold of $S \times \sph^{\g^*}$.
\end{lem}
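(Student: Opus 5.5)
The plan is to show that $\Ac$ is, locally over $S$, the image of a smooth embedding built from a smooth orthonormal frame of the subspaces $\af_s$. In other words, $\Ac$ is the unit-sphere bundle of a smooth vector subbundle of the trivial bundle $S \times \g^* \to S$.

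First I will fix the dimension. Since $\F$ has constant rank $m$ and, by Definition \ref{defn:triv-smcs}, $\F_{(s,e)} = T_s^*S \oplus \F^G_{(s,e)}$, every $\af_s$ has the constant dimension $k := m - \dim S$. If $k = 0$ then $\Ac = \emptyset$ and there is nothing to prove, so I assume $k \geq 1$.

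Next I will build smooth local frames for $s \mapsto \af_s$; I expect this to be the main technical step. Fix $s_0 \in S$. Smoothness of the codistribution $\F$ gives smooth local sections $\Theta_1, \dots, \Theta_m$ of $T^*Q$ spanning $\F$ near $(s_0, e)$. Restrict them to the slice $\{(s,e)\}$ and take their $G$-components $\Theta_i^G(s) \in \g^*$. These depend smoothly on $s$ because the splitting $T^*_{(s,e)}Q \cong T_s^*S \oplus \g^*$ is a smooth product decomposition. Since $T_s^*S \oplus \{0\} \subseteq \F_{(s,e)}$, the $S$-components can be discarded, and the $\Theta^G_i(s)$ span $\af_s$. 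Choose $k$ of them that are linearly independent at $s_0$; they remain independent on a neighborhood $U$ of $s_0$ by continuity of a determinant. Apply Gram--Schmidt with respect to the inner product on $\g^*$ induced by $\braket{\cdot,\cdot}_G$. This yields smooth maps $f_1, \dots, f_k : U \to \g^*$ that form an orthonormal basis of $\af_s$ for each $s \in U$.

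With this frame, define
\begin{align*}
    \Phi : U \times \sph^{k-1} \to S \times \sph^{\g^*}, \quad \Phi(s, c) = \Big(s, \textstyle\sum_{i=1}^k c_i f_i(s)\Big).
\end{align*}
By orthonormality, $\Phi$ indeed lands in $S \times \sph^{\g^*}$, and its image is exactly $\Ac \cap (U \times \sph^{\g^*})$. I will then establish three properties of $\Phi$:
\begin{enumerate}
    \item $\Phi$ is injective, since the $f_i(s)$ are linearly independent and the first component recovers $s$.
    \item $\Phi$ is an immersion. The $s$-component of $d\Phi$ is the identity on $T_sS$. In the fiber direction, $c \mapsto \sum_i c_i f_i(s)$ is the restriction of a linear isometry $\R^k \to \g^*$, so it is injective on $T_c\sph^{k-1}$.
    \item $\Phi$ is proper onto $U \times \sph^{\g^*}$. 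If $K \subseteq U \times \sph^{\g^*}$ is compact, then $\Phi^{-1}(K)$ is a closed subset of $\mathrm{pr}_S(K) \times \sph^{k-1}$, which is compact because $\sph^{k-1}$ is compact.
\end{enumerate}
A proper injective immersion is a smooth embedding with closed image. Hence $\Ac \cap (U \times \sph^{\g^*})$ is an embedded submanifold of the open set $U \times \sph^{\g^*}$, of dimension $\dim S + k - 1$.

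Finally, embeddedness is a local property: a subset is an embedded submanifold if each of its points has a neighborhood in the ambient manifold where it satisfies the local slice condition. The sets $U \times \sph^{\g^*}$ cover $S \times \sph^{\g^*}$ as $s_0$ ranges over $S$, so this gives the claim. The only delicate point is the frame construction in the second paragraph. There I must check that discarding the $S$-components genuinely produces spanning sets of $\af_s$, which is exactly where the inclusion $T_s^*S \oplus \{0_g\} \subseteq \F_{(s,g)}$ from Definition \ref{defn:triv-smcs} is used.
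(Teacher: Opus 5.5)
Your proof is correct, but it takes a different route from the paper's.

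\textbf{The paper's route.} The paper works with orthogonal complements. It takes smooth $\theta_1,\dots,\theta_r : V \to \g^*$ spanning $\af_s^\perp$ and forms $\Theta(s,f) = (\braket{\theta_i(s), f})_{i=1}^r$ on $V \times \sph^{\g^*}$. It then observes that $d_2\Theta_{(s_0,f_0)}$ stays surjective when restricted to $T_{f_0}\sph^{\g^*}$, because every $\theta_i(s_0)$ is orthogonal to $f_0$. So $\Theta$ is a local defining function, and $\Ac$ is locally a regular level set.

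\textbf{Your route.} You parametrize $\Ac$ from the inside using an orthonormal frame of $\af_s$ itself. You then realize it locally as the image of a proper injective immersion of $U \times \sph^{k-1}$.

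\textbf{What yours buys.}
\begin{enumerate}
    \item It shows that $\pr_S : \Ac \to S$ is a locally trivial $\sph^{k-1}$-bundle.
    \item It gives $\dim \Ac = \dim S + k - 1$ explicitly.
    \item It derives the smooth frame from a frame of $\F$, using the inclusion $T_s^*S \oplus \{0_g\} \subseteq \F_{(s,g)}$. The paper only asserts that a smooth frame for $\af_s^\perp$ exists, which tacitly requires a frame for $\af_s$ like yours plus a smooth complementation step.
\end{enumerate}

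\textbf{What the paper's buys.} It is tailored to what follows. The same $\Theta$ gives $T_{(s_0,f_0)}\Ac = \ker d\Theta_{(s_0,f_0)}$ immediately. It is then reused verbatim in Proposition \ref{prop:act-cond}, whose condition \eqref{eq:subm-cond} is phrased through the $\theta_i$. With your parametrization you would have to recover that criterion yourself. You would differentiate the identity $\braket{\theta_j(s), \sum_i c_i f_i(s)} \equiv 0$ to convert surjectivity of $d(\pi \circ \Phi)$ into surjectivity of $d(\Theta_{f_0})_{s_0}$.
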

\begin{proof}
    Let $r = \dim \af_s^\perp$ (since $\F$ has constant rank, $r$ is independent of $s$), and fix $(s_0, f_0) \in \Ac$. As $\F$ is a smooth codistribution, there are $r$ smooth functions $\theta_i: V \to \g^*$ on a neighborhood $V$ of $s_0$ in $S$, for which $\af_s^\perp = \spn\{\theta_i(s)\}_{i = 1}^r$ on $V$. Define a map $\tilde \Theta: V \times \g^* \to \R^r$ by
    \begin{align*}
        \tilde \Theta: (s, f) \mapsto (\braket{\theta_1(s), f}, ..., \braket{\theta_r(s), f}),
    \end{align*}
    where $\braket{\cdot, \cdot}$ is the inner product on $\g^*$ induced by the metric. To prove that $\Ac$ is an embedded submanifold of $S \times \sph^{\g^*}$, we shall construct a \textit{local defining function} for $\Ac$ from $\tilde \Theta$. To this end, we study the differential of $\tilde \Theta$ at $(s_0, f_0)$,
    \begin{align*}
        d(\tilde \Theta)_{(s_0, f_0)}(v_{s_0}, w_{f_0}) &= d_1 (\tilde \Theta)_{(s_0, f_0)}(v_{s_0})\\
        &+ d_2(\tilde \Theta)_{(s_0, f_0)}(w_{f_0})
    \end{align*}
    Under the canonical identifications $ T_{\tilde \Theta(s_0, f_0)} \R^r \cong \R^r$ and $T_{f_0} \g^* \cong \g^*$, $d_2 \tilde \Theta_{(s_0, f_0)}$ is the linear map $w_{f_0} \mapsto \tilde \Theta(s_0, w_{f_0})$, and is therefore surjective. We argue its restriction to $T_{f_0} \sph^{\g^*}$ remains surjective. As a subspace of $\g^*$, $T_{f_0} \sph^{\g^*}$ is the orthogonal complement of $\spn\{f_0\}$. Since $\theta_i(s_0) \perp f_0$ for all $i$, $(T_{f_0}\sph^{\g^*})^\perp \subseteq \ker d_2(\tilde \Theta)_{(s_0, f_0)}$. Decomposing $T_{f_0} \g^* = T_{f_0} \sph^{\g^*} \oplus (T_{f_0}\sph^{\g^*})^\perp$, we then conclude that surjectivity of $d_2(\tilde \Theta)_{(s_0, f_0)}$ implies surjectivity of its restriction to $T_{f_0} \sph^{\g^*}$. Thus, for $\Theta = \tilde \Theta|_{V \times \sph^{\g^*}}$, $d_2 \Theta_{(s_0, f_0)}$ is surjective. In turn, $d \Theta_{(s_0, f_0)}$ is surjective in a neighborhood $W$ of $(s_0, f_0)$ in $S \times \sph^{\g^*}$. Since $\Ac \cap W = \Theta|_{W}^{-1}(0)$, $\Theta|_W$ is then a local defining function for $\Ac$. Thus, each point of $\Ac$ belongs to the domain of a local defining function, and $\Ac$ is an embedded submanifold of $S \times \sph^{\g^*}$ \cite[Prop. 5.16]{lee2012smooth}.
\end{proof}
As $\Ac$ consists of all $(s, f)$ for which $f \in \sph^{\g^*}$ belongs to $\af_s$, we can factor any local Lie shape map $\s$ through $\Ac$ as:
\[\begin{tikzcd}
	{\mathfrak U \subseteq \sph^{\g^*}} & \Ac & S
	\arrow["{f \mapsto (s, f)}", from=1-1, to=1-2]
	\arrow["\s"', curve={height=18pt}, from=1-1, to=1-3]
	\arrow["{(s, f) \mapsto s}", from=1-2, to=1-3].
\end{tikzcd}\]
Thus, each $\s$ is fully determined by a map of the form $f \in \mathfrak U \mapsto (s, f) \in \Ac$. Using this observation, we now precisely characterize the maps $f \mapsto (s, f) \in \Ac$.
\begin{prop}\label{prop:act-cond}
    Let $\Sigma$ be a trivial SMCS with $\dim \af_s^\perp = r$ and $\Theta$ be as in Lemma \ref{lem:A-defn}. Consider condition $(\ast)$:
    \begin{align}
        \parbox{22.3em}{$\forall \; (s_0, f_0) \in \Ac$, there is a neighborhood $V$ of $s_0$ in $S$ and $r$ smooth $\theta_i: V \to \g^*$ spanning $\af_s^\perp$ on $V$ for which $\Theta_{f_0}: s \mapsto (\braket{\theta_i(s), f_0})_{i = 1}^r$ has surjective $d(\Theta_{f_0})_{s_0}$.} \tag{$\ast$} \label{eq:subm-cond}
    \end{align}
    For maps $\pi: \Ac \to \sph^{\g^*}, (s, f) \mapsto f$ and $\pr_S: (s, f) \mapsto s$:
    \begin{enumerate}
        \item $\pi$ is a submersion if and only if \eqref{eq:subm-cond} holds. If $\pi$ is a surjective submersion, then $\pi: \Ac \to \sph^{\g^*}$ is a bundle\footnote{A \textit{bundle} $\pi: E \to Q$ consists of manifolds $E$ and $Q$ and a smooth, surjective submersion $\pi$, a surjective map with surjective differential. A local section of $\pi: E \to Q$ is a map $\sigma: U \subseteq Q \to E$ for which $\pi \circ \sigma = \id_U$.}.
        \item Let $\pi: \Ac \to \sph^{\g^*}$ be a bundle. Every local Lie shape map $\s: \Uf \to S$ determines a unique local section $\sigma: \Uf \to \Ac$ for which $\pr_S \circ \sigma = \s$. Conversely, for every local section $\sigma$, $\pr_S \circ \sigma$ is a local Lie shape map.
    \end{enumerate}
\end{prop}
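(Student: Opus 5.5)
For item (1), I will work with the local defining function $\Theta$ of Lemma \ref{lem:A-defn}. Near $(s_0,f_0)\in\Ac$ we have $\Ac\cap W=\Theta|_W^{-1}(0)$ with $d\Theta$ surjective on $W$, so
\[
T_{(s_0,f_0)}\Ac=\ker d\Theta_{(s_0,f_0)}=\{(v,w)\in T_{s_0}S\oplus T_{f_0}\sph^{\g^*} : d_1\Theta(v)+d_2\Theta(w)=0\}.
\]
The differential $d\pi_{(s_0,f_0)}$ is the restriction of $(v,w)\mapsto w$ to this kernel. Given $w\in T_{f_0}\sph^{\g^*}$, there is a preimage if and only if $d_2\Theta(w)\in\operatorname{im} d_1\Theta_{(s_0,f_0)}$. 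The key observation is that $d_1\Theta_{(s_0,f_0)}=d(\Theta_{f_0})_{s_0}$, since $\Theta_{f_0}=\Theta(\cdot,f_0)$. From the proof of Lemma \ref{lem:A-defn}, $d_2\Theta$ restricted to $T_{f_0}\sph^{\g^*}$ is already surjective onto $\R^r$. Hence $d\pi_{(s_0,f_0)}$ is surjective if and only if $\R^r\subseteq\operatorname{im} d(\Theta_{f_0})_{s_0}$, that is, if and only if $d(\Theta_{f_0})_{s_0}$ is surjective.

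The main obstacle is that condition \eqref{eq:subm-cond} asks only for the existence of \emph{some} spanning frame $\theta_i$, so I must check that the criterion does not depend on this choice. If $\theta'_i=\sum_j A_{ij}(s)\theta_j$ with $A(s)$ invertible, then at points where $\Theta_{f_0}(s_0)=0$ the product rule gives $d\Theta'_{f_0}=A(s_0)\,d\Theta_{f_0}$. The term involving $dA$ drops out because it multiplies $\Theta_{f_0}(s_0)=0$. Surjectivity is therefore frame-independent. This handles both directions of the equivalence, with the frame in Lemma \ref{lem:A-defn} taken to be the one from \eqref{eq:subm-cond}. The bundle statement then follows immediately from the definition in the footnote, given that $\pi$ is smooth as a restriction of a projection to an embedded submanifold.

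For item (2), let $\s:\Uf\to S$ be a local Lie shape map and define $\sigma(f)=(\s(f),f)$. This lands in $\Ac$ because $f\in\af_{\s(f)}\cap\sph^{\g^*}$. It is smooth into $S\times\sph^{\g^*}$, and hence smooth into $\Ac$ because $\Ac$ is embedded (Lemma \ref{lem:A-defn}). By construction $\pi\circ\sigma=\id$ and $\pr_S\circ\sigma=\s$. For uniqueness, any section satisfies $\sigma(f)=(s',f)$ since $\pi\circ\sigma=\id$, and the condition $\pr_S\circ\sigma=\s$ then forces $s'=\s(f)$.

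Conversely, let $\sigma$ be a local section over open $\Uf$. Then $\sigma(f)=(\s(f),f)$ with $\s=\pr_S\circ\sigma$ smooth, and membership in $\Ac$ gives $f\in\af_{\s(f)}$. So $\s$ is a local Lie shape map, and composing with Lemma \ref{lem:reduce-struct} produces a local shape map.
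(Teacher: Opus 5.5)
Your proposal is correct and follows essentially the same route as the paper. You identify $T_{(s_0,f_0)}\Ac$ with $\ker d\Theta_{(s_0,f_0)}$, use surjectivity of $d_2\Theta$ on $T_{f_0}\sph^{\g^*}$ to reduce submersivity of $\pi$ to surjectivity of $d_1\Theta_{(s_0,f_0)} = d(\Theta_{f_0})_{s_0}$, and obtain item (2) from $\sigma(f) = (\s(f), f)$. Your frame-independence check and your smoothness-into-$\Ac$ remark via embeddedness are correct additions the paper leaves implicit; the former is not strictly needed, since the kernel argument works for any local spanning frame.
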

\begin{proof}
    First, we show (1). Fix $(s_0, f_0) \in \Ac$, and let $\theta_i$ be as in \eqref{eq:subm-cond}. We will show that $d\pi_{(s_0, f_0)}$ is surjective \textit{iff} $d(\Theta_{f_0})_{s_0}$ is. As $d\pi_{(s_0, f_0)}(v_{s_0}, w_{f_0}) = w_{f_0}$, we must prove that
    \begin{align}
        \forall w_{f_0}, \; \exists \; v_{s_0} \text{ s.t. } (v_{s_0}, w_{f_0}) \in T_{(s_0, f_0)}\Ac, \label{eq:first-tan-sp-cond}
    \end{align}
    \textit{iff} $d(\Theta_{f_0})_{s_0}$ is surjective. As shown in the proof of Lemma \ref{lem:A-defn}, the map $\Theta: (s, f) \mapsto (\braket{\theta_i(s), f})$ is a local defining function for $\Ac$ in a neighborhood of $(s_0, f_0)$. Hence, $T_{(s_0, f_0)} \Ac = \ker d\Theta_{(s_0, f_0)}$. Now, recall that $d(\Theta)_{(s_0, f_0)}(v_{s_0}, w_{f_0})$ is computed $d_1(\Theta)_{(s_0, f_0)}v_{s_0} + d_2(\Theta)_{(s_0, f_0)}w_{f_0}$, and, as shown in Lemma \ref{lem:A-defn}, $d_2(\Theta)_{(s_0, f_0)}$ is surjective. Thus, the condition:
    \begin{align*}
        \forall w_{f_0}, \; \exists \; v_{s_0} \text{ s.t. } d(\Theta)_{(s_0, f_0)}(v_{s_0}, w_{f_0}) = 0,
    \end{align*}
    which is equivalent to \eqref{eq:first-tan-sp-cond}, holds if and only if $d_1(\Theta)_{(s_0, f_0)}$ is surjective. As $d_1(\Theta)_{(s_0, f_0)}$ is surjective \textit{iff} $d(\Theta_{f_0})_{s_0}$ is, we conclude $\pi$ is a submersion \textit{iff} \eqref{eq:subm-cond} holds. Since the bundle condition is definitional, item (1) follows. 

    Now, we show (2). Suppose $\s: \Uf \to S$ is a local Lie shape map. Since $f \in \af_{\s(f)}$ for all $f \in \Uf$, $\sigma: f \mapsto (\s(f), f) \in \Ac$ is a local section of $\pi: \Ac \to \sph^{\g^*}$. Further, this is the unique section $\sigma: \Uf \to \Ac$ for which $\pr_S \circ \sigma = \s$. Now, we show the converse. Suppose $\sigma: \Uf \to \Ac$ is a local section. Since for each $f \in \Uf$, $\sigma(f) \in \Ac$ and $\pi(\sigma(f)) = f$, it must be that $f \in \af_{\pr_S \circ \sigma(f)}$. Item (2) follows, completing the proof.
\end{proof}
Together, Lemma \ref{lem:reduce-struct} and Proposition \ref{prop:act-cond} yield the following conditions for the existence of shape maps.
\begin{theo}\label{theo:shape-map}
    Let $\Sigma$ be a trivial SMCS for which \eqref{eq:subm-cond} holds and $\pi$ is surjective. Then, $\Sigma$ has a local shape map in a neighborhood of each $F_g \in T^*G \setminus Z^*(G)$. If $\pi: \Ac \to \sph^{\g^*}$ has a global section, then $\Sigma$ has a global shape map.
\end{theo}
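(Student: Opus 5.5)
By Proposition \ref{prop:act-cond}(1), condition \eqref{eq:subm-cond} together with surjectivity of $\pi$ makes $\pi: \Ac \to \sph^{\g^*}$ a bundle, i.e. a smooth surjective submersion. The plan is to obtain local sections of $\pi$ from the local section theorem for submersions, turn them into local Lie shape maps via Proposition \ref{prop:act-cond}(2), and then lift these to local shape maps on $T^*G \setminus Z^*(G)$ using Lemma \ref{lem:reduce-struct}(2).

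\textbf{Local statement.} Fix $F_g \in T^*G \setminus Z^*(G)$ and set $f_0 = d(L_g)_e^* \tfrac{F_g}{\norm{F_g}_G} \in \sph^{\g^*}$. This normalized, left-translated covector is well defined because $F_g \neq 0$ and $d(L_g)_e^*$ is an isomorphism.
\begin{enumerate}
    \item Since $\pi$ is surjective, there is some $(s_0, f_0) \in \Ac$.
    \item Since $\pi$ is a submersion, the local section theorem \cite[Thm. 4.26]{lee2012smooth} gives an open neighborhood $\Uf$ of $f_0$ in $\sph^{\g^*}$ and a smooth local section $\sigma: \Uf \to \Ac$ with $\sigma(f_0) = (s_0, f_0)$.
    \item By Proposition \ref{prop:act-cond}(2), $\s = \pr_S \circ \sigma$ is a local Lie shape map. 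Smoothness of $\s$ holds because $\Ac$ is embedded in $S \times \sph^{\g^*}$ (Lemma \ref{lem:A-defn}), so $\pr_S|_\Ac$ is smooth.
    \item By Lemma \ref{lem:reduce-struct}(2), $\s$ induces a local shape map $\Sc$ on $U = \bigcup_{h \in G, \lambda > 0} d(L_{h^{-1}})_h^*(\lambda \Uf)$.
    \item Taking $h = g$ and $\lambda = \norm{F_g}_G$ in this union gives $F_g = \norm{F_g}_G\, d(L_{g^{-1}})_g^* f_0 \in U$. So $U$ is the required neighborhood of $F_g$.
\end{enumerate}

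\textbf{Global statement.} Apply the same argument with a global section $\sigma: \sph^{\g^*} \to \Ac$, so that $\Uf = \sph^{\g^*}$. Then
\[
U = \bigcup_{h \in G,\, \lambda > 0} d(L_{h^{-1}})_h^*(\lambda \sph^{\g^*}) = T^*G \setminus Z^*(G),
\]
since every nonzero covector at $h$ is a positive multiple of the left translate of a unit covector in $\g^*$. Hence $\Sc$ is a global shape map.

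\textbf{Remaining checks.} None of the steps is substantive; each reduces to a cited result or to bookkeeping.
\begin{itemize}
    \item Openness of $U$ in $T^*G$ is already asserted in the proof of Lemma \ref{lem:reduce-struct}. If needed, it follows from the diffeomorphism $T^*G \setminus Z^*(G) \cong G \times \R_{>0} \times \sph^{\g^*}$ given by left trivialization and normalization, under which $U$ corresponds to $G \times \R_{>0} \times \Uf$.
    \item The local section theorem must be applied at the specific point $(s_0, f_0)$, which exists by surjectivity of $\pi$.
\end{itemize}
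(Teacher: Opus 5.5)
Your proposal is correct and follows the paper's proof: Proposition~\ref{prop:act-cond}(1) makes $\pi$ a surjective submersion, and the local section theorem \cite[Thm. 4.26]{lee2012smooth} gives a section near $f = d(L_g)_e^*\tfrac{F_g}{\norm{F_g}_G}$. Proposition~\ref{prop:act-cond}(2) and Lemma~\ref{lem:reduce-struct} then turn it into a local shape map around $F_g$, with the global case identical; your explicit checks that $F_g \in U$, that $U = T^*G \setminus Z^*(G)$ when $\Uf = \sph^{\g^*}$, and that $U$ is open are details the paper leaves implicit (note that $f_0 \in \sph^{\g^*}$ uses left-invariance of $\braket{\cdot,\cdot}_G$, not just invertibility of $d(L_g)_e^*$).
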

\begin{proof}
    Let $F_g \neq 0$ and $f = d(L_g)_{e}^* \tfrac{F_g}{\norm{F_g}_G}$. Assuming \eqref{eq:subm-cond} and surjectivity, $\pi$ is a surjective submersion, and therefore has a local section on a neighborhood of $f$ \cite[Thm. 4.26]{lee2012smooth}. Applying Proposition \ref{prop:act-cond} and Lemma \ref{lem:reduce-struct}, we obtain a local Lie shape map on a neighborhood of $f$ and a local shape map on a neighborhood of $F_g$. The global case is identical.
\end{proof}

Although there are generally topological obstructions to the existence of global sections---and hence to global shape maps---it is often the case that they exist on large domains.
\begin{example}\label{ex:qrotor-shape}
    We construct a local shape map for the quadrotor on $SE(3)$. First, following \cite{welde2023role}, we consider the map
    \begin{align}
        \sigma&: \sph^2 \setminus \{-e_3\} \to SO(3)\\
        \sigma(p) &= \begin{bmatrix}
            1 - \frac{p_1^2}{p_3 + 1} & \frac{-p_1 p_2}{p_3 + 1} & p_1\\
            \frac{-p_1 p_2}{p_3 + 1} & 1 - \frac{p_2^2}{p_3 + 1} & p_2\\
            -p_1 & -p_2 & p_3
        \end{bmatrix},
    \end{align}
    and define $\Sc$ by $\Sc(f) =\sigma(f/\norm{f}_2)$ for all nonzero $f \neq -\lambda e_3$, $\lambda > 0$. This maps $f$ to $R\in SO(3)$ with $Re_3 = f/\norm{f}_2$. For $\psi$ as in Example \ref{ex:so3-config-error} and $R \in SO(3)$,
    \begin{align*}
         \norm{\Pc_{(R, x)}\Big(\tfrac{f}{\norm{f}_G}\Big) - \tfrac{f}{\norm{f}_G}}_G^2 \leq 2\psi(R, \Sc(f)).
    \end{align*}
\end{example}

%% file: cbf-synth.tex
\section{Hierarchical CBF Synthesis}
Theorem \ref{theo:shape-map} gives conditions establishing the existence of a link---a shape map---between inputs to a virtual system and configurations of the true system. This provides the main ingredient used in the quadrotor architecture of \cite{lee2010control} to manage underactuation. Now, we show that this architecture can be employed more generally for the constructive safety-critical control of trivial simple mechanical control systems.

Below, we describe the problem setting. To maintain a concise exposition, we make several simplifying assumptions.
\begin{setting}\label{setting:lay-arch}
    Let $\Sigma$ be a trivial SMCS and $h_0 \in C^\infty(G)$ have zero a regular value and a compact zero superlevel set $\C_0 = h_0^{-1}(\R_{\geq 0})$ of safe configurations in $G$. Assume:
    \begin{enumerate}
        \item An $L$-uniformly quadratic configuration error function $\psi$ on the shape space $S$, for which 
        \begin{align*}
        \psi \leq L \Rightarrow b_1 \norm{d_1 \psi_{(s, r)}}_S^2 \leq \psi(s, r) \leq b_2 \norm{d_1 \psi_{(s, r)}}_S^2,
        \end{align*}
        and $b_1 \norm{d_1 \psi_{(s, r)}}_S^2 \leq \psi(s, r)$ for all $(s, r) \in S\times S$.
        \item A transport map $\T$ on $S$ that is compatible with $\psi$.
        \item A local shape map $\Sc: U \subset T^* G \setminus Z^*(G) \to S$ that is $L$-compatible with $\psi$, satisfying 
        \begin{align*}
            \norm{\Pc_{(s, g)}\Big(\tfrac{F_g}{\norm{F_g}_G}\Big) - \tfrac{F_g}{\norm{F_g}_G}}_G^2 &\leq C \psi(s, \Sc(F_g)),
        \end{align*}
        for all $s \in S $ and $F_g \in U$ with $\psi(s, \Sc(F_g)) \leq L$. Shrinking $L$ if necessary, assume that $L < \frac{1}{C}$.
    \end{enumerate}
\end{setting}
This section proceeds as follows. First, we will detail a layered architecture for safety-critical control. Then, we will design controllers for each layer of the architecture and produce certificates of their performance. Finally, we will combine these certificates to produce a CBF $h$ for $\Sigma$.

\subsection{Layered Architecture}
Now, recalling the kinematic reduced-order model architecture proposed in Problem \ref{prob:cbf-constr}, and the quadrotor tracking architecture \cite{lee2010control}, we propose the following:
\begin{enumerate}
    \item Design a safe kinematic vector field $\kappa \in \X(G)$ for $\C_0$.
    \item Design a controller $\tilde F$ for the virtual model $\tilde \Sigma_G$ that tracks the safe vector field $\kappa$. 
    \item Compute the control input $F = (F^S, F^G)$ to $\Sigma$ by:
    \begin{enumerate}
        \item Designing a PD tracking controller $F^S(v_s, v_g)$ for the $S$-subsystem $\Sigma_S$ that tracks $\Sc(\tilde F(v_g))$.
        \item Projecting $\tilde F$ onto $\F^G$, $F^G(v_s, v_g) = \Pc_{(s, g)}(\tilde F(v_g))$.
    \end{enumerate}
\end{enumerate}
This architecture encodes several layers of abstraction. First, we have the simplest model of the system, given by a kinematic reduced-order model $\kappa$ for $\Sigma$ on $G$. Next, we have a virtual mechanical model $\tilde \Sigma_G$ of the full system, for which we design a $\kappa$-tracking controller. Finally, we compute the input to the true system by tracking the virtual input. Altogether, this produces the desired $\kappa$-tracking architecture.

To construct a CBF from this architecture, we follow the approach of \cite{molnar2021model}. We will construct a certificate of the form
\begin{align*}
    h = h_0 - \sigma (\text{$\tilde \Sigma_G$ Tracking Cert.}) - \lambda (\text{$ \Sigma_S$ Tracking Cert.}),
\end{align*}
for $\sigma, \lambda > 0$. In the remainder of the section, we construct the controllers and tracking certificates for each layer of the architecture, resulting in a closed-form expression for $h$. 

\subsection{Constructing a Safe Hierarchical Controller}
 We begin by designing a controller for the virtual model $\tilde \Sigma_G$ that tracks the velocities of a safe vector field on $G$.
\begin{lem}\label{lem:virt-contr}
    Consider Setting \ref{setting:lay-arch}, and let $\kappa \in \X(G)$ satisfy $d(h_0)_g \kappa_g > -\alpha h_0(g) + \delta \norm{\grad h_0|_g}_G^2$ on $\C_0$, for $\alpha, \delta > 0$. Consider the following controller for the virtual model:
    \begin{align*}
        \tilde F(v_g) =  \big[\nabla_{v_g}^G\kappa + \grad V_G|_g -\tilde k_d\dot{\tilde e}\big]^\flat, \; \dot{\tilde e} = v_g - \kappa_g, 
    \end{align*}
    where $\tilde k_d > 0$ is a fixed constant. The Lie derivative of $v_g \mapsto \tfrac{1}{2}\lVert{\dot{\tilde e}}\rVert_G^2$ along trajectories of the closed-loop $\tilde \Sigma_G$ is $- \tilde k_d \lVert{\dot{\tilde e}}\rVert_G^2$.
\end{lem}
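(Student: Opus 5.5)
We compute the derivative directly, using metric compatibility of the Levi-Civita connection $\nabla^G$ on $G$. Let $g(t)$ be a trajectory of the closed-loop virtual model $\tilde \Sigma_G$, and regard $\dot{\tilde e}(t) = \dot g(t) - \kappa_{g(t)}$ as a vector field along the curve $g(t)$. The Lie derivative of $v_g \mapsto \tfrac{1}{2}\lVert{\dot{\tilde e}}\rVert_G^2$ along the closed-loop flow equals $\tfrac{d}{dt}\tfrac{1}{2}\langle \dot{\tilde e}, \dot{\tilde e}\rangle_G$ evaluated along such curves. Since $\nabla^G$ is compatible with $\braket{\cdot,\cdot}_G$, this equals $\langle D_t \dot{\tilde e}, \dot{\tilde e}\rangle_G$, where $D_t$ denotes the covariant derivative along $g(t)$.

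We next expand $D_t \dot{\tilde e}$ by linearity of $D_t$. First, $D_t \dot g = \nabla^G_{\dot g}\dot g$. Second, because $\kappa$ is a vector field on $G$, its covariant derivative along the curve is $D_t(\kappa_{g(t)}) = \nabla^G_{\dot g}\kappa$; this is the standard chain-rule property of covariant derivatives along curves for extendible fields. Hence
\begin{align*}
    D_t \dot{\tilde e} = \nabla^G_{\dot g}\dot g - \nabla^G_{\dot g}\kappa.
\end{align*}

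We then substitute the virtual-model dynamics, which are \eqref{eq:smcs-eom} for $\tilde \Sigma_G$ with the fully actuated input $\tilde F$ (admissible since the input codistribution is $T^*G$):
\begin{align*}
    \nabla^G_{\dot g}\dot g = -\grad V_G|_g + \tilde F(\dot g)^\sharp = \nabla^G_{\dot g}\kappa - \tilde k_d \dot{\tilde e},
\end{align*}
using $\sharp\circ\flat = \id$. Therefore $D_t\dot{\tilde e} = -\tilde k_d\dot{\tilde e}$, and the derivative equals $-\tilde k_d\lVert{\dot{\tilde e}}\rVert_G^2$, as claimed.

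The main point needing care is identifying the Lie derivative on $TG$ with $\tfrac{d}{dt}$ along closed-loop trajectories, and justifying $D_t \kappa_{g(t)} = \nabla^G_{\dot g}\kappa$. The former holds because $\tilde F$ is smooth, so trajectories exist through every $v_g$. For the latter, we can cite the standard property of $D_t$ (Lee, Riemannian Manifolds). The safety hypothesis on $\kappa$ is not used in this computation.
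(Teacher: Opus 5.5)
Your proof is correct and takes essentially the same route as the paper. You differentiate $\tfrac{1}{2}\lVert\dot{\tilde e}\rVert_G^2$ along closed-loop trajectories, write the covariant derivative of $\dot{\tilde e}$ as $\nabla^G_{\dot g}\dot g - \nabla^G_{\dot g}\kappa$, and substitute the virtual-model dynamics so that only $-\tilde k_d\dot{\tilde e}$ survives, making explicit the metric-compatibility and $D_t\kappa_{g(t)} = \nabla^G_{\dot g}\kappa$ steps that the paper compresses into a single line.
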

\begin{proof}
    The Lie derivative of $\tfrac{1}{2}\lVert{\dot{\tilde e}}\rVert_G^2$ at $v_g \in TG$ is
    \begin{align*}
        \tfrac{d}{dt}\tfrac{1}{2}\lVert{\dot{\tilde e}}\rVert_G^2 = \braket{ \dot{\tilde e}, - \grad V_G|_g + \tilde F(v_g)^\sharp - \nabla_{v_g}^G \kappa}.
    \end{align*}
    Substituting $\tilde F(v_g)$ yields $\frac{d}{dt}\tfrac{1}{2}\lVert{\dot{\tilde e}}\rVert_G^2 = -\tilde k_d \lVert{\dot{\tilde e}}\rVert_G^2$.
\end{proof}
The condition on $\kappa$ is a combination of a safety requirement (in $\alpha$) and a robustness requirement (in $\delta$). Outside of these requirements, $\kappa$ can be chosen freely to satisfy other performance specifications. For instance, $\kappa$ can be chosen as a smooth safety filter with an added robustness term \cite{de2026bundles}.

Next, using the shape map for $\Sigma$, we construct a reference trajectory for the next layer in the architecture from $\tilde F$ and $\Sc$. In order to effectively use the shape map $\Sc$, we shall require a set on which $\tilde F(v_g) \neq 0$ and $\tilde F(v_g) \in U$. To this end, fix $\sigma_0 > 0$, and consider the set:
\begin{align}
    N_{\sigma_0} = \{v_g \in TG : h_0(g) - \tfrac{\sigma_0}{2}\norm{\dot{\tilde e}}_G^2 \geq 0\}. 
\end{align}
Later, we shall build upon $N_{\sigma_0}$ to form our final safe set.
\begin{defn}\label{defn:ref}
    Consider Setting \ref{setting:lay-arch} and $\tilde F$ of Lemma \ref{lem:virt-contr}. Assume there is a $\sigma_0 > 0$ and a neighborhood $O$ of $N_{\sigma_0}$ for which $\tilde F|_O \neq 0$ and $\tilde F(O) \subset U$. Let $F^G(v_s, v_g) = \Pc_{(s, g)}(\tilde F(v_g))$.
    \begin{enumerate}
        \item The \textit{reference} induced by $\tilde F$ is the smooth map $r: O \to S$ defined $r(v_g) = \Sc(\tilde F(v_g))$.
        \item The \textit{reference velocity} induced by $\tilde F$ is the smooth map $\dot r: S \times O \to TS$ defined\footnote{Using local coordinates, one may show that $\dot r$ depends only on $(s, v_g)$.} $\dot r(s, v_g) = \frac{d}{dt}\big|_{t = 0} \Sc(\tilde F(\dot g(t)))$, for $(s(t), g(t))$ a trajectory of $\Sigma$ with $s(0) = s$, $\dot g(0) = v_g$, and $G$-controller $F^G$.
    \end{enumerate}
    We abbreviate $r(v_g), \dot r(s, v_g)$ as $r, \dot r$ when context allows.
\end{defn}
\begin{remark}
    When $\grad V_G \neq 0$, the assumption of a $\sigma_0 > 0$ for which $\tilde F \neq 0$ on a neighborhood $O$ of $N_{\sigma_0}$ is often straightforward to verify; when $\grad V_G \equiv 0$, this requires a more careful topological analysis. Due to space constraints, we do not pursue this analysis here.
\end{remark}
Taking advantage of the full actuation of the $S$-subsystem, we can track this reference using the geometric PD controller $F^S = F_{\text{PD}} + F_{\text{FF}}$ of \cite{bullo1999tracking}. $F_{\text{PD}}$ and $F_{\text{FF}}$ are defined as follows. For a trajectory $(s(t), g(t))$ of $\Sigma$ with  $G$-controller $F^G$, $r(t) = r(\dot g(t))$, $\dot e(t) = \dot s(t) - \T_{(s(t), r(t))}\dot r(t)$, and $k_d > 0$,
\begin{align*}
    F_{\text{PD}}  &= -d_1 \psi_{(s(t), r(t))} - k_d (\dot e(t))^\flat \\
    F_{\text{FF}}  &= \big[\nabla \T_{(s(t), r(t))}(\dot s(t), \dot r(t)) + \tfrac{d}{dt}\big|_{s = s(t)}\big(\T_{(s, r(t))}\dot r(t) \big)\big]^\flat.
\end{align*}
For each $(v_s, v_g)$, evaluating $F_{\text{PD}} + F_{\text{FF}}$ at $(\dot s(0), \dot g(0)) = (v_s, v_g)$ defines a control law $F^S(v_s, v_g)$ enabling $s$ to track $r$. As shown by \cite{bullo1999tracking}, the tracking performance of this controller can be certified through a careful analysis of two energy-like functions on $TS \times TS$:
\begin{align*}
    W(v_s, v_r) &= \psi(s, r) + \frac{1}{2}\norm{\dot e}_S^2, \; \dot e = v_s - \T_{(s, r)}v_r,\\
    W_\epsilon(v_s, v_r) &= W(v_s, v_r) + \epsilon \braket{d_1 \psi_{(s, r)}; \dot e}.
\end{align*}
In our setting, we will use $W_\epsilon$ to construct a candidate barrier function $h$, and $W$ to ensure that the $L$-regularity conditions and safety constraint are respected on the safe set of $h$.
\begin{lem}\label{lem:c0-bounds}
    Consider Setting \ref{setting:lay-arch}. Fix $\tilde F$ as in Lemma \ref{lem:virt-contr}, $\sigma_0$ and $O$ as in Definition \ref{defn:ref}, and $\epsilon_0 \in (0, \sqrt{2b_1})$. Then:
    \begin{enumerate}
        \item $W_{\epsilon_0} \geq c_0(\epsilon_0) W$ across $TS \times TS$, for $c_0(\epsilon) = 1 - \tfrac{\epsilon}{\sqrt{2b_1}}$.
        \item For $\lambda_0 \geq \frac{\sup_{\hat g \in \C_0} h_0(\hat g)}{L c_0(\epsilon_0)}$, the function $h^0: TS \times O \to \R$,
        \begin{align*}
            h^0(v_s, v_g) &= h_0(g) - \tfrac{\sigma_0}{2}\lVert{\dot{\tilde e}}\rVert_G^2 - \lambda_0 c_0(\epsilon_0) W(v_s, \dot r),
        \end{align*}
        is smooth and has compact $0$-superlevel set $\C^0$. Further, $h_0(g) \geq 0$ and $\psi(s, r) \leq L$ for all $(v_s, v_g) \in \C^0$. 
        \item The function $h: TS \times O \to \R$,
        \begin{align*}
            h(v_s, v_g) = h_0(g) - \tfrac{\sigma}{2}\lVert{\dot{\tilde e}}\rVert_G^2 - \lambda W_\epsilon(v_s, \dot r),
        \end{align*} 
        for $\sigma \geq \sigma_0, \; \lambda \geq \lambda_0$, and $\epsilon \in (0, \epsilon_0)$, is smooth and has $0$-superlevel set $\C \subseteq \C^0$.
    \end{enumerate}
\end{lem}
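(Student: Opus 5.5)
The plan is to handle the three items in order, with item (1) a pointwise inequality, item (2) a containment argument followed by compactness, and item (3) a monotonicity comparison with $h^0$.

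\textbf{Item (1).} I will bound the cross term $\epsilon_0\braket{d_1\psi_{(s,r)};\dot e}$ from below.

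By Cauchy--Schwarz and the global lower bound $b_1\norm{d_1\psi_{(s,r)}}_S^2 \leq \psi(s,r)$ from Setting \ref{setting:lay-arch}, valid on all of $S \times S$,
\begin{align*}
    \big|\braket{d_1\psi_{(s,r)};\dot e}\big| \leq \norm{d_1\psi_{(s,r)}}_S\norm{\dot e}_S \leq \tfrac{1}{\sqrt{b_1}}\sqrt{\psi(s,r)}\,\norm{\dot e}_S .
\end{align*}
AM--GM gives $W = \psi + \tfrac12\norm{\dot e}_S^2 \geq \sqrt{2}\sqrt{\psi}\,\norm{\dot e}_S$. Combining the two,
\begin{align*}
    \epsilon_0\big|\braket{d_1\psi;\dot e}\big| \leq \tfrac{\epsilon_0}{\sqrt{2b_1}} W .
\end{align*}
Hence $W_{\epsilon_0} \geq (1-\epsilon_0/\sqrt{2b_1})W = c_0(\epsilon_0)W$ everywhere, and this is nonnegative because $\epsilon_0 < \sqrt{2b_1}$.

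\textbf{Item (2): smoothness and the two pointwise conclusions.} Smoothness of $h^0$ holds because it is built by composition from smooth objects on $TS\times O$: $h_0$, $\kappa$, the metric, $\psi$, $\T$, the reference $r$, and the reference velocity $\dot r$ of Definition \ref{defn:ref}.

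Now take $(v_s,v_g)\in\C^0$. Since $W \geq 0$ and $c_0 > 0$,
\begin{align*}
    h_0(g) \geq \tfrac{\sigma_0}{2}\lVert\dot{\tilde e}\rVert_G^2 + \lambda_0 c_0(\epsilon_0) W \geq 0 .
\end{align*}
This gives $g \in \C_0$ and $v_g \in N_{\sigma_0}$. Moreover,
\begin{align*}
    \lambda_0 c_0(\epsilon_0)\,\psi(s,r) \leq \lambda_0 c_0(\epsilon_0) W \leq h_0(g) \leq \sup_{\C_0} h_0 ,
\end{align*}
and the choice of $\lambda_0$ turns this into $\psi(s,r) \leq L$.

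\textbf{Item (2): compactness.} First, $N_{\sigma_0}$ is compact. Its base points lie in the compact set $\C_0$, and $\norm{v_g-\kappa_g}_G$ is bounded there, so $N_{\sigma_0}$ is a closed, fiberwise-bounded subset of $TG$ over a compact base.

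Second, $\C^0$ is closed in $TS\times TG$, not merely in $TS\times O$. The reason is that $\C^0$ projects into $N_{\sigma_0}$, which sits inside the open set $O$, so limit points of $\C^0$ cannot escape to $\partial O$.

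Third, the reference $r$ ranges over the compact set $K = r(N_{\sigma_0})$, and the reference velocity $\dot r$ is continuous, hence bounded on the relevant compact set. The $s$-components of $\C^0$ lie in $\bigcup_{r\in K}\psi_r^{-1}([0,L])$. On $\C^0$ we also have $\tfrac12\norm{\dot e}_S^2 \leq L$, so $v_s = \dot e + \T_{(s,r)}\dot r$ is bounded in norm.

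The step I expect to be the main obstacle is showing that $\bigcup_{r\in K}\psi_r^{-1}([0,L])$ is compact, since properness of $\psi$ is only assumed for each fixed $r$. I would first try a local-uniformity argument near each $r_0\in K$: use properness of $\psi_{r_0}$ at a slightly larger level $L' > L$, together with continuity of $\psi$, to trap $\psi_r^{-1}([0,L])$ inside the compact set $\psi_{r_0}^{-1}([0,L'])$ for all $r$ near $r_0$. The trapping step is the delicate part and may need a boundary argument on $\psi_{r_0}^{-1}(L')$. Compactness of $K$ then lets finitely many such neighborhoods cover it. If that fails, I would add a mild uniform-properness assumption; this holds for the $SO(3)$ error function of Example \ref{ex:so3-config-error} because $SO(3)$ is compact.

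\textbf{Item (3).} Smoothness follows exactly as in item (2). For the containment, note that $c_0$ is decreasing in $\epsilon$, so item (1) applied with $\epsilon < \epsilon_0$ gives
\begin{align*}
    W_\epsilon \geq c_0(\epsilon)W \geq c_0(\epsilon_0)W \geq 0 .
\end{align*}
Then $\sigma \geq \sigma_0$ and $\lambda \geq \lambda_0$ yield $h \leq h^0$ pointwise on $TS\times O$. Therefore $h \geq 0$ implies $h^0 \geq 0$, which is exactly $\C\subseteq\C^0$.
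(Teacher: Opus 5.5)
Your items (1) and (3), and the pointwise half of (2), follow the paper's proof; your AM--GM step is just another way of checking the paper's $Q_{\epsilon_0}-c_0(\epsilon_0)Q\succeq 0$. Your compactness of $N_{\sigma_0}$, closedness of $\C^0$ in $TS\times TG$, and bound $\norm{\dot e}_S\le\sqrt{2L}$ are also fine. The gap is the step you flag yourself: compactness of $\bigcup_{r\in K}\psi_r^{-1}([0,L])$. Properness of $\psi_{r_0}$ at a level $L'>L$, continuity of $\psi$, and the tube lemma only give $\psi_r>L$ on the compact set $\psi_{r_0}^{-1}(L')$ for $r$ near $r_0$. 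That confines the connected component of $\psi_r^{-1}([0,L])$ containing $r$, but no other component, and per-$r$ properness plus continuity really cannot exclude far-away components. For example, on $S=\R$, let $\eta$ be a smooth bump with $0\le\eta\le 1$, equal to $1$ near $1$ and supported in $(\tfrac12,2)$. Then $\psi(s,r)=(s-r)^2(1-\eta(sr))+\eta(sr)\,\tfrac{(s-r)^2}{1+(s-r)^2}$ satisfies items 1--4 of Definition~\ref{defn:config-error-func}. Yet $\psi(1/r,r)<1$ for $r\in(0,1]$, so $\bigcup_{r\in[0,1]}\psi_r^{-1}([0,1])\supseteq[1,\infty)$. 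The paper's one-line appeal to ``properness of each $\psi_r$'' glosses over the same point. Falling back on an extra uniform-properness hypothesis would change the lemma; instead you need the quantitative bounds of Setting~\ref{setting:lay-arch}.

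Either bound closes the gap.

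(a) By symmetry of $\psi$, the global inequality $b_1\norm{d_1\psi_{(q,r)}}_S^2\le\psi(q,r)$ on $S\times S$ also controls the second argument. For fixed $s$, $\phi=\psi(s,\cdot)=\psi_s$ satisfies $\lVert d\sqrt{\phi+\eta}\rVert_S\le 1/(2\sqrt{b_1})$ for every $\eta>0$. Letting $\eta\to 0$ gives $|\sqrt{\psi(s,r)}-\sqrt{\psi(s,r_0)}|\le d_S(r,r_0)/(2\sqrt{b_1})$. So $\psi(s,r)\le L$ and $d_S(r,r_0)<\rho$ imply $\psi(s,r_0)\le(\sqrt{L}+\rho/(2\sqrt{b_1}))^2$. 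A finite cover of $K$ by $\rho$-balls then puts the $s$-projection of $\C^0$ inside finitely many compact sublevel sets.

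(b) To complete your own boundary argument instead, use $\psi\le b_2\norm{d_1\psi}_S^2$ on $\psi_r^{-1}([0,L])$. This rules out critical points of $\psi_r$ at levels in $(0,L]$. Along the negative gradient flow, which is complete on this compact forward-invariant set, $\psi_r$ decays exponentially, so every point of $\psi_r^{-1}([0,L])$ converges to $r$. Hence the sublevel set is connected. For $r$ near $r_0$ it contains $r\in\{\psi_{r_0}<L'\}$ and misses $\psi_{r_0}^{-1}(L')$, so it lies in $\psi_{r_0}^{-1}([0,L'])$; compactness of $K$ finishes.

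With either fix in hand, the rest of your argument goes through. Note that your boundedness of $\dot r$ and $\T$ presupposes $s$ already lies in a compact set, so it must come after this step.
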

\begin{proof}
    First, we show (1). Using $b_1 \lVert{d_1 \psi_{(s, r)}}\rVert_S^2 \leq \psi(s, r)$, we bound $W_\epsilon \geq \psi(s, r) + \frac{1}{2}\norm{\dot e}_S^2 - \epsilon\sqrt{1/b_1} \sqrt{\psi(s, r)} \norm{\dot e}_S$. We can assemble this lower bound into a quadratic form with symmetric matrix $Q_{\epsilon_0}$; provided $\epsilon_0 \in (0, \sqrt{2b_1})$, this form will be positive definite. Similarly, $W$ can be written as a quadratic form in the same variables, with symmetric, positive definite $Q$. For the given $c_0$ function, $Q_{\epsilon_0} - c_0(\epsilon_0) Q$ is PSD, and item (1) follows.

    Next, we show (2). Since $\tilde F \neq 0$ on $O$, $h^0$ is well-defined and smooth. Fix $(v_s, v_g)$ for which $h^0(v_s, v_g) \geq 0$. Since $\frac{\sigma_0}{2}\lVert{\dot{\tilde e}}\rVert_G^2$ and $W$ are both non-negative, $h_0(g) \geq 0$. Further, $h_0(g) \geq \lambda_0 c_0(\epsilon_0) W(v_s, \dot r)$, implying $\sup_{\hat g \in \C_0} h_0(\hat g) \geq \lambda_0 c_0(\epsilon_0) W(v_s, \dot r)$. Given the choice of $\lambda_0$, this implies $L \geq W(v_s, \dot r)$, and therefore that $L \geq \psi(s, r)$. One may prove compactness of $\C^0$ using compactness of $\C_0$ and the properness of each map $\psi_r: s \mapsto\psi(s, r)$.

    Finally, we show item (3). Well-definedness and smoothness follow as in (2). Now, we show that $\C \subseteq \C^0$. Similar to (1), we find that $W_\epsilon \geq c_0(\epsilon) W$ for any $\epsilon \in (0, \epsilon_0)$. Since $c_0(\epsilon) W \geq c_0(\epsilon_0) W$ when $\epsilon < \epsilon_0$, it follows that $W_\epsilon \geq c_0(\epsilon_0) W$ for all $\epsilon \in (0, \epsilon_0)$. Thus, for $\sigma \geq \sigma_0, \lambda \geq \lambda_0$, and $\epsilon \in (0, \epsilon_0)$, $h(v_s, v_g) \leq h^0(v_s, v_g)$. Item (3) follows.
\end{proof}
Thus, for $\sigma \geq \sigma_0$, $\lambda \geq \lambda_0$, and $\epsilon \in (0, \epsilon_0)$, $\C \subseteq \C^0$. Thus, configuration safety, $L$-regularity conditions, and bounds derived on $\C^0$ all hold on $\C$ by inclusion. With this in mind, we bound the derivatives of the $G$ and $S$ components of $h$.
\begin{lem}\label{lem:lam-g-bound}
    Consider Setting \ref{setting:lay-arch}. Let $\kappa, \tilde F$ be defined as in Lemma \ref{lem:virt-contr}, $\C^0$ as in Lemma \ref{lem:c0-bounds}, and $\sigma \geq \sigma_0$. Under the controller $F^G(v_s, v_g) = \Pc_{(s, g)}(\tilde F(v_g))$, the Lie derivative of $\tilde h(v_g) = h_0(g) - \frac{\sigma}{2}\lVert{\dot{\tilde e}}\rVert_G^2$ at $(v_s, v_g) \in \C^0$ satisfies:
    \begin{align*}
        \dot{\tilde h} >& - \alpha h_0(g) +  \begin{bmatrix}
            \norm{\grad h_0|_g}_G\\
            \norm{\dot{\tilde e}}_G
        \end{bmatrix}^\top \Lambda_G \begin{bmatrix}
            \norm{\grad h_0|_g}_G\\
            \norm{\dot{\tilde e}}_G
        \end{bmatrix}\\
        &- \sigma B \norm{\dot{\tilde e}}_G \sqrt{C \psi(s,r)}\\
        \Lambda_G =& 
        \begin{bmatrix}
            \delta & -\frac{1}{2}\\
            -\frac{1}{2} & \sigma\tilde k_d(1 - \sqrt{CL})
        \end{bmatrix},
    \end{align*}
    for $\dot{\tilde e} = v_g - \kappa_g$ and $B \geq \sup_{v_g \in N_{\sigma_0}} \lVert{\nabla_{v_g}^G \kappa + \grad V_G|_g}\rVert_G$. 
\end{lem}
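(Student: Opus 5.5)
We differentiate $\tilde h$ along the closed loop with $F^G = \Pc_{(s,g)}(\tilde F(v_g))$ and treat the projection error as a perturbation of the virtual closed loop of Lemma \ref{lem:virt-contr}. Under $F^G$ the $G$-subsystem obeys
\begin{align*}
\nabla^G_{\dot g}\dot g = -\grad V_G|_g + \tilde F^\sharp + \Delta^\sharp, \qquad \Delta := \Pc_{(s,g)}(\tilde F(v_g)) - \tilde F(v_g).
\end{align*}
Hence
\begin{align*}
\dot{\tilde h} = d(h_0)_g v_g - \sigma\braket{\dot{\tilde e}, \nabla^G_{v_g}\dot g - \nabla^G_{v_g}\kappa}
= d(h_0)_g\kappa_g + \braket{\grad h_0|_g, \dot{\tilde e}} - \sigma\tilde k_d\norm{\dot{\tilde e}}_G^2 - \sigma\braket{\dot{\tilde e}, \Delta^\sharp},
\end{align*}
where the $\tilde k_d$ term is exactly the computation in the proof of Lemma \ref{lem:virt-contr}. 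On $\C^0$ we have $h_0(g)\geq 0$ by Lemma \ref{lem:c0-bounds}(2), so the hypothesis on $\kappa$ gives $d(h_0)_g\kappa_g > -\alpha h_0(g) + \delta\norm{\grad h_0|_g}_G^2$. Cauchy--Schwarz gives $\braket{\grad h_0, \dot{\tilde e}} \geq -\norm{\grad h_0}_G\norm{\dot{\tilde e}}_G$, which supplies the off-diagonal entries $-\tfrac12$ of $\Lambda_G$.

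The remaining work is to bound $\sigma\abs{\braket{\dot{\tilde e},\Delta^\sharp}} \leq \sigma\norm{\dot{\tilde e}}_G\norm{\Delta}_G$. By scale invariance of $\Pc$ (it is linear), $\norm{\Delta}_G = \norm{\tilde F}_G\,\norm{\Pc(\tilde F/\norm{\tilde F}) - \tilde F/\norm{\tilde F}}_G$. Because $\psi(s,r)\leq L$ on $\C^0$ (Lemma \ref{lem:c0-bounds}(2)) and $r = \Sc(\tilde F(v_g))$, $L$-compatibility of $\Sc$ gives $\norm{\Delta}_G \leq \norm{\tilde F}_G\sqrt{C\psi(s,r)}$. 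Next split $\tilde F^\sharp = (\nabla^G_{v_g}\kappa + \grad V_G|_g) - \tilde k_d\dot{\tilde e}$, so $\norm{\tilde F}_G \leq B + \tilde k_d\norm{\dot{\tilde e}}_G$. The inequality $\norm{\nabla^G_{v_g}\kappa + \grad V_G}_G\leq B$ requires $v_g\in N_{\sigma_0}$. This holds on $\C^0$ because $h^0\geq 0$ forces $h_0(g) - \tfrac{\sigma_0}{2}\norm{\dot{\tilde e}}^2_G\geq 0$. The supremum defining $B$ is finite since $N_{\sigma_0}$ is compact ($\C_0$ is compact and $\dot{\tilde e}$ is bounded there).

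Putting these together,
\begin{align*}
\sigma\norm{\dot{\tilde e}}_G\norm{\Delta}_G \leq \sigma B\norm{\dot{\tilde e}}_G\sqrt{C\psi(s,r)} + \sigma\tilde k_d\sqrt{C\psi(s,r)}\,\norm{\dot{\tilde e}}_G^2 .
\end{align*}
Using $\psi(s,r)\leq L$ once more, the second term is at most $\sigma\tilde k_d\sqrt{CL}\norm{\dot{\tilde e}}_G^2$. This combines with $-\sigma\tilde k_d\norm{\dot{\tilde e}}^2_G$ to give the diagonal entry $\sigma\tilde k_d(1-\sqrt{CL})$. The first term stays as the stated residual. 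Collecting the terms into the quadratic form with $\Lambda_G$ gives the claim, with strict inequality inherited from the hypothesis on $\kappa$.

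The main obstacle is not the algebra. It is keeping the bookkeeping of the regions straight: $\tilde F\neq 0$, $\tilde F\in U$, $\psi\leq L$ and $v_g\in N_{\sigma_0}$ must all hold simultaneously on $\C^0$, since each is needed to invoke compatibility or the bound $B$. We also need the projection identity $\Pc(\lambda F) = \lambda\Pc(F)$.
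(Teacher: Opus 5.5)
Your proof is correct and follows the paper's argument essentially step for step. You add and subtract $\tilde F$, bound the nominal part using the hypothesis on $\kappa$ (valid because $h_0(g)\geq 0$ on $\C^0$) together with Cauchy--Schwarz, and bound the projection error via $L$-compatibility and $\lVert \tilde F\rVert_G \leq B + \tilde k_d \lVert \dot{\tilde e}\rVert_G$. Your explicit check that $(v_s,v_g)\in\C^0$ forces $v_g\in N_{\sigma_0}$, which is needed for the bound by $B$, is a detail the paper leaves implicit.

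There is one sign slip to fix. Since $\tfrac{d}{dt}\tfrac{1}{2}\lVert \dot{\tilde e}\rVert_G^2 = -\tilde k_d \lVert \dot{\tilde e}\rVert_G^2 + \braket{\dot{\tilde e}, \Delta^\sharp}$, the corresponding term in $\dot{\tilde h}$ is $+\sigma \tilde k_d \lVert \dot{\tilde e}\rVert_G^2$, not $-\sigma \tilde k_d \lVert \dot{\tilde e}\rVert_G^2$. It is this positive term that absorbs $-\sigma \tilde k_d \sqrt{CL}\lVert \dot{\tilde e}\rVert_G^2$ to produce the entry $\sigma \tilde k_d(1-\sqrt{CL})$ of $\Lambda_G$.
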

\begin{proof}
    Throughout, we write $\norm{\cdot}$ for $\norm{\cdot}_G$ and $\tilde F$ for $\tilde F(v_g)$. Adding and subtracting a $\tilde F$ term, we first compute:
    \begin{align}
        \dot{\tilde h}(v_s, v_g) &= d(h_0)_{g}v_g + \sigma \tilde k_d \norm{\dot{\tilde e}}^2  \label{eq:nom-deriv} \\
        &- \sigma\braket{\dot{\tilde e}, \big(\Pc_{(s, g)}(\tilde F) - \tilde F\big)^\sharp}. \label{eq:proj-error}
    \end{align}
    We start by bounding the first term in $\dot{\tilde h}$, \eqref{eq:nom-deriv}. Adding and subtracting $\kappa_g$ from $v_g$ in $d(h_0)v_g$, we bound:
    \begin{align}
        \eqref{eq:nom-deriv} &> -\alpha h_0(g) + \delta \norm{\grad h_0|_g}^2 \label{eq:nom-deriv-bound} \\ 
        &- \norm{\grad h_0|_g} \norm{\dot{\tilde e}} + \sigma \tilde k_d \norm{\dot{\tilde e}}^2. \nonumber 
    \end{align}
   We bound \eqref{eq:proj-error} using $L$-compatibility of $\Sc$ and $\psi$:
    \begin{align}
        \eqref{eq:proj-error} &\geq -\sigma \norm{\dot{\tilde e}} \lVert{\tilde F}\rVert  \lVert{(\Pc_{(s, g)}(\tilde F/\lVert{\tilde F}\rVert) - \tilde F/\lVert{\tilde F}\rVert)}\rVert \nonumber \\
        &\geq -\sigma \norm{\dot{\tilde e}} \lVert{\tilde F}\rVert \sqrt{C \psi(s, r)}. \label{eq:proj-error-bound}
    \end{align}    
    Combining \eqref{eq:nom-deriv-bound} and \eqref{eq:proj-error-bound} and applying $\lVert \tilde F \rVert \leq B + \tilde k_d \norm{\dot{\tilde e}}$,
    \begin{align}
        \dot{\tilde h} &> - \alpha h_0(g) + \delta \norm{\grad h_0|_g}^2 - \norm{\grad h_0|_g} \norm{\dot{\tilde e}}\\
        &+ \sigma \tilde k_d \norm{\dot{\tilde e}}^2
        - \sigma \tilde k_d \norm{\dot{\tilde e}}^2 \sqrt{C L} - \sigma B\norm{\dot{\tilde e}}\sqrt{C\psi(s, r)} , \nonumber
    \end{align}
    where we use that $\psi \leq L$ on $\C^0$. Collecting like terms and assembling a quadratic form with $\Lambda_G$, the result follows.
\end{proof}
Next, we bound the derivative of $W_\epsilon$, following \cite{bullo2005geometric}.
\begin{lem}\label{lem:lam-s-bound}
    Consider Setting \ref{setting:lay-arch}, and let $\tilde F$ be as in Lemma \ref{lem:virt-contr} and $\C^0$ as in Lemma \ref{lem:c0-bounds}. Let $F^S$ be a geometric PD controller for $\Sigma_S$ with $r, \dot r$ as in Definition \ref{defn:ref}. Then, there are $M_i \geq 0$ for which $\norm{\Hess_1 \psi_{(s, r)}}_S \leq M_1$, $\norm{\nabla \T_{(s, r)}}_S \leq M_2$, and $\norm{\dot r(s, v_g)}_S \leq M_3$ on $\C^0$. For $\epsilon \in (0, \epsilon_0)$ and $(v_s, v_g) \in \C^0$:
    \begin{align*}
        \dot W_\epsilon(v_s, \dot r(s, v_g)) &\leq - \begin{bmatrix}
            \norm{d_1 \psi_{(s, r)}}_{S}\\
            \norm{\dot e}_S
        \end{bmatrix}^\top \Lambda_S \begin{bmatrix}
            \norm{d_1 \psi_{(s, r)}}_{S}\\
            \norm{\dot e}_S
        \end{bmatrix}\\
        \Lambda_S &= 
        \begin{bmatrix}
            \epsilon & -\frac{\epsilon(k_d + M_2 M_3)}{2}\\
            -\frac{\epsilon(k_d + M_2 M_3)}{2} & k_d - \epsilon M_1
        \end{bmatrix}.
    \end{align*}
\end{lem}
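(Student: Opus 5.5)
First, the existence of the constants $M_1, M_2, M_3$ follows from compactness. By Lemma \ref{lem:c0-bounds}, $\C^0$ is compact. The maps $(v_s, v_g) \mapsto \norm{\Hess_1 \psi_{(s, r(v_g))}}_S$, $\norm{\nabla \T_{(s, r(v_g))}}_S$ and $\norm{\dot r(s, v_g)}_S$ are continuous on $TS \times O$. Here $r$ and $\dot r$ are smooth by Definition \ref{defn:ref}, since $\tilde F \neq 0$ and $\tilde F(O) \subset U$, and $\psi$, $\T$ are smooth. Each of these maps therefore attains a finite maximum on $\C^0$, and we take $M_1, M_2, M_3$ to be these maxima.

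For the derivative bound we follow the proof of \cite[Thm.\ 11.29]{bullo2005geometric} (see also \cite{bullo1999tracking}), evaluated along a closed-loop trajectory with $\dot s(0)=v_s$, $\dot g(0)=v_g$. The $S$-subsystem is fully actuated and decoupled from $G$ except through the reference, so we may compute as if $r(t)=r(\dot g(t))$ were an exogenous curve with velocity $\dot r$. We use three facts.
\begin{enumerate}
\item Compatibility of $\T$ with $\psi$ gives $\tfrac{d}{dt}\psi(s,r) = \braket{d_1\psi; \dot s} + \braket{d_2 \psi; \dot r} = \braket{d_1\psi_{(s,r)}; \dot e}$.
\item The feedforward $F_{\text{FF}}$ is designed so that the error dynamics become $\nabla_{\dot s}\dot e = -(d_1\psi)^\sharp - k_d \dot e$. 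Hence $\tfrac{d}{dt}\tfrac12\norm{\dot e}^2 = -\braket{d_1\psi;\dot e} - k_d\norm{\dot e}^2$, and therefore $\dot W = -k_d\norm{\dot e}_S^2$.
\item For the cross term we differentiate $\braket{d_1\psi_{(s,r)};\dot e}$. This gives $\braket{\nabla_{\dot s}(d_1\psi);\dot e}$ plus the $r$-derivative of $d_1\psi$, plus $\braket{d_1\psi;\nabla_{\dot s}\dot e}$.
\end{enumerate}
The last piece of item 3 equals $-\norm{d_1\psi}^2 - k_d\braket{d_1\psi;\dot e}$. For the first two pieces, we write $\dot s = \dot e + \T\dot r$. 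The derivative of $d_1\psi$ along $(\T\dot r, \dot r)$ can be rewritten using compatibility, by differentiating the identity $d_2\psi = -\T^* d_1\psi$, as a term involving $\nabla\T(\cdot,\dot r)$ paired with $d_1\psi$. The remaining part is $\Hess_1\psi(\dot e,\dot e)$. Bounding with $M_1, M_2, M_3$ then gives
\begin{align*}
\tfrac{d}{dt}\braket{d_1\psi;\dot e} \leq -\norm{d_1\psi}_S^2 + M_1\norm{\dot e}_S^2 + (k_d + M_2M_3)\norm{d_1\psi}_S\norm{\dot e}_S .
\end{align*}
Adding $\epsilon$ times this to $\dot W = -k_d\norm{\dot e}^2$ yields exactly the quadratic form with matrix $\Lambda_S$. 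All bounds are used only on $\C^0$, where $\psi \leq L$ by Lemma \ref{lem:c0-bounds}.

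The main obstacle is the cross-term computation. One must show that the $r$-dependence of $d_1\psi_{(s,r)}$, combined with $\Hess_1\psi$ along $\T\dot r$, collapses, via compatibility and symmetry of $\psi$, into a single term bounded by $M_2M_3\norm{d_1\psi}\norm{\dot e}$. I would carry this out first in the manner of \cite[Lemma 11.27]{bullo2005geometric}, by differentiating the compatibility identity in $r$. If that collapse fails, I would absorb an extra $M_1M_3$-type contribution into the off-diagonal entry.
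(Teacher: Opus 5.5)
Your proposal is correct and follows the paper's argument: the $M_i$ come from compactness of $\C^0$, and the identity $\dot W_\epsilon = -k_d\norm{\dot e}_S^2 + \epsilon \Hess_1\psi(\dot e,\dot e) - \epsilon\norm{d_1\psi}_S^2 - \epsilon\braket{d_1\psi; k_d\dot e + \nabla\T(\dot e,\dot r)}$ is then bounded term by term. The paper simply quotes this identity from Bullo--Lewis, and your derivation is sound: the collapse you flag does hold, so the fallback (which would change $\Lambda_S$) is unnecessary, because differentiating $d_2\psi = -\T^* d_1\psi$ in the $s$-slot gives the mixed partial as $-\Hess_1\psi(\cdot,\T\dot r) - \braket{d_1\psi;\nabla\T(\cdot,\dot r)}$, and the $\Hess_1\psi$ terms cancel by symmetry of the Hessian (not symmetry of $\psi$).
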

\begin{proof}
    The existence of each $M_i$ follows from compactness of $\C^0$. To derive the bound, we recall from \cite[p. 652]{bullo2005geometric} that:
    \begin{align*}
        \dot W_\epsilon =& -k_d \norm{\dot e}_S^2 + \epsilon \Hess_1 \psi_{(s, r)}(\dot e, \dot e) - \epsilon\norm{d_1 \psi_{(s, r)}}_S^2\\
        &- \epsilon \braket{d_1 \psi_{(s, r)}; k_d \dot e + \nabla \T_{(s, r)}(\dot e, \dot r)}.
    \end{align*}
    Applying each $M_i$ bound and organizing like terms into a quadratic form, the result follows.
\end{proof}

Now, we put the individual bounds together to show that each controller can be designed to make $h$ a barrier function.
\begin{theo}\label{theo:h-cbf}
    Consider Setting \ref{setting:lay-arch}. Let $\tilde F$ be the controller of Lemma \ref{lem:virt-contr} and $\sigma_0 > 0$ be as in Definition \ref{defn:ref}, with:
    \begin{align*}
        \alpha > 0, \quad \delta > 0, \quad \tilde k_d > \tfrac{1}{2 \delta \sigma_0(1 - \sqrt{CL})}.
    \end{align*}
    Next, fix $\lambda_0, \epsilon_0$ as in Lemma \ref{lem:c0-bounds}. Let $F^S$ be a geometric PD controller with reference $r$ as in Definition \ref{defn:ref} and gain $k_d > 0$. Consider the function:
    \begin{align*}
        h(v_s, v_g) = h_0(g) - \tfrac{\sigma}{2}\lVert{\dot{\tilde e}}\rVert_G^2 - \lambda W_\epsilon(v_s, \dot r(s, v_g)),
    \end{align*}
    where $\epsilon, \sigma, \lambda$ satisfy:
    \begin{align*}
        \quad \epsilon &\in (0, \min\{\epsilon_0, \tfrac{2k_d}{2M_1 + (k_d + M_2 M_3)^2}\})\\
        \sigma &\geq \sigma_0, \quad \lambda \geq \max\{\lambda_0, \tfrac{\sigma B^2 C b_2}{\epsilon\tilde k_d(1 - \sqrt{CL})}\}.
    \end{align*}
    Define matrices $\hat \Lambda_G = \Lambda_G - \diag\big(0, \frac{1}{2}\sigma \tilde k_d(1 - \sqrt{CL})\big)$ and $\hat \Lambda_S = \Lambda_S - \diag\big(\frac{\epsilon}{2}, 0\big)$, and choose $C_G, C_S > 0$ for which:
    \begin{align*}
        \hat \Lambda_G \succeq C_G \begin{bmatrix}
            0 & 0\\
            0 & 1
        \end{bmatrix}, \quad \hat \Lambda_S \succeq C_S \begin{bmatrix}
            b_2 & \frac{\epsilon}{2}\\
            \frac{\epsilon}{2} & \frac{1}{2}
        \end{bmatrix}.
    \end{align*}
    If $\alpha \leq  \min\big\{\tfrac{2}{\sigma}C_G, C_S\big\}$, the Lie derivative of $h$ for the closed-loop system satisfies $\dot h(v_s, v_g) > - \alpha h(v_s, v_g)$ for all $(v_s, v_g)$ with $h(v_s, v_g) \geq 0$, and $h$ is a CBF for $\Sigma$ on $TS \times O$.
\end{theo}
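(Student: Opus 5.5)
The plan is to write $h = \tilde h - \lambda W_\epsilon$, with $\tilde h$ as in Lemma \ref{lem:lam-g-bound}, and to restrict attention to points with $h \geq 0$. By Lemma \ref{lem:c0-bounds} these points lie in $\C \subseteq \C^0$. On $\C^0$ we have $h_0(g) \geq 0$ and $\psi(s,r) \leq L$, so Lemmas \ref{lem:lam-g-bound} and \ref{lem:lam-s-bound} both apply. Summing the two bounds gives
\begin{align*}
\dot h > -\alpha h_0(g) + z_G^\top \Lambda_G z_G - \sigma B \norm{\dot{\tilde e}}_G\sqrt{C\psi(s,r)} + \lambda z_S^\top \Lambda_S z_S,
\end{align*}
where $z_G = (\norm{\grad h_0|_g}_G, \norm{\dot{\tilde e}}_G)$ and $z_S = (\norm{d_1\psi_{(s,r)}}_S, \norm{\dot e}_S)$. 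The goal is to show this right-hand side is at least $-\alpha h_0 + \tfrac{\alpha\sigma}{2}\norm{\dot{\tilde e}}_G^2 + \alpha\lambda W_\epsilon = -\alpha h$.

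First, handle the cross term. Since $\psi(s,r) \leq L$ on $\C^0$, uniform quadraticity gives $\psi \leq b_2\norm{d_1\psi}_S^2$, so
\begin{align*}
\sigma B\norm{\dot{\tilde e}}_G\sqrt{C\psi} \leq \sigma B\sqrt{C b_2}\,\norm{\dot{\tilde e}}_G\norm{d_1\psi}_S .
\end{align*}
Apply Young's inequality to bound this by $\tfrac{1}{2}\sigma\tilde k_d(1-\sqrt{CL})\norm{\dot{\tilde e}}_G^2 + \tfrac{\sigma B^2 C b_2}{2\tilde k_d(1-\sqrt{CL})}\norm{d_1\psi}_S^2$. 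The first piece is absorbed by splitting $\Lambda_G = \hat\Lambda_G + \diag(0,\tfrac12\sigma\tilde k_d(1-\sqrt{CL}))$. The second piece is absorbed by splitting $\lambda\Lambda_S = \lambda\hat\Lambda_S + \lambda\diag(\tfrac{\epsilon}{2},0)$, since the hypothesis on $\lambda$ gives $\tfrac{\lambda\epsilon}{2} \geq \tfrac{\sigma B^2Cb_2}{2\tilde k_d(1-\sqrt{CL})}$. What remains is
\begin{align*}
\dot h > -\alpha h_0 + z_G^\top\hat\Lambda_G z_G + \lambda z_S^\top\hat\Lambda_S z_S .
\end{align*}

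Next, convert the remaining quadratic forms into the desired decay. The choice of $C_G$ gives $z_G^\top\hat\Lambda_G z_G \geq C_G\norm{\dot{\tilde e}}_G^2$, and $C_G \geq \tfrac{\alpha\sigma}{2}$ follows from $\alpha \leq 2C_G/\sigma$. For the $S$-part, use $\psi \leq b_2\norm{d_1\psi}^2$ and $\braket{d_1\psi;\dot e} \leq \norm{d_1\psi}\norm{\dot e}$ to get
\begin{align*}
W_\epsilon \leq z_S^\top\begin{bmatrix} b_2 & \epsilon/2\\ \epsilon/2 & 1/2\end{bmatrix} z_S .
\end{align*}
This holds when $\braket{d_1\psi;\dot e} \geq 0$, and trivially otherwise because the matrix is PSD. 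Hence $\lambda z_S^\top\hat\Lambda_S z_S \geq \lambda C_S W_\epsilon \geq \alpha\lambda W_\epsilon$. Note that $W_\epsilon \geq 0$ by Lemma \ref{lem:c0-bounds}, so the sign is correct when $\alpha \leq C_S$. Combining these gives $\dot h > -\alpha h$ on $\{h \geq 0\}$.

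Finally, feasibility. The conditions on $\tilde k_d$ and $\epsilon$ should make $\hat\Lambda_G$ and $\hat\Lambda_S$ positive semidefinite (respectively, positive definite on the relevant coordinates), so that positive $C_G, C_S$ exist. For $\hat\Lambda_G$: its lower-right entry is $\tfrac12\sigma\tilde k_d(1-\sqrt{CL})$ and its off-diagonal is $-\tfrac12$, so the determinant condition reduces to $\delta\sigma\tilde k_d(1-\sqrt{CL}) > \tfrac12$, which is the assumed bound on $\tilde k_d$ (using $\sigma \geq \sigma_0$). For $\hat\Lambda_S$: the determinant $\tfrac{\epsilon}{2}(k_d-\epsilon M_1) - \tfrac{\epsilon^2(k_d+M_2M_3)^2}{4}$ is positive exactly when $\epsilon < \tfrac{2k_d}{2M_1+(k_d+M_2M_3)^2}$. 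Since the closed-loop inequality holds with $F$ an admissible input in $\F$, the supremum condition \eqref{eq:cbf-defn} is satisfied with the linear class function $\alpha r$. This makes $h$ a CBF on $TS\times O$.

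I expect the main obstacle to be the careful bookkeeping of the cross term. The bound $\psi \leq b_2\norm{d_1\psi}^2$ is only valid where $\psi \leq L$, which is why the restriction to $\C\subseteq\C^0$ is essential. A further subtlety is making sure the matrix bound on $W_\epsilon$ has the correct sign for the $\epsilon$-term.
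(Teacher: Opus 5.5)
Your proposal is correct and follows essentially the same route as the paper. You restrict to $\C \subseteq \C^0$, sum Lemmas \ref{lem:lam-g-bound} and \ref{lem:lam-s-bound}, split off $\diag(0,\tfrac12\sigma\tilde k_d(1-\sqrt{CL}))$ and $\diag(\tfrac{\epsilon}{2},0)$ to absorb the cross term, and finish with $C_G$, $C_S$ and the quadratic upper bound on $W_\epsilon$. Your Young's-inequality step is equivalent to the paper's $\hat\Lambda_C \succeq 0$ (both reduce to the stated lower bound on $\lambda$), and your determinant checks on $\hat\Lambda_G$, $\hat\Lambda_S$ make explicit a feasibility claim the paper only asserts.
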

\begin{proof}
    We will drop subscripts where clear; e.g. $\lVert{d_1 \psi_{(s, r)}}\rVert_S$ will be written $\norm{d_1\psi}$. Fix $(v_s, v_g) \in \C$. Given the choice of constants, $\C \subseteq \C^0$. Hence, all bounds derived on $\C^0$ hold and $g$ belongs to $\C_0$. Define the vectors $z_G = [\norm{\grad h_0}; \lVert{\dot{\tilde e}}\rVert]$ and $z_S = [\norm{d_1 \psi}; \norm{\dot e}]$. Combining Lemmas \ref{lem:lam-g-bound} and \ref{lem:lam-s-bound} with $\sqrt{C \psi} \leq \sqrt{Cb_2} \norm{d_1 \psi}$, we bound $\dot h(v_s, v_g)$ below by
    \begin{align*}
        > - \alpha h_0 + z_G^\top \Lambda_G z_G + \lambda z_S^\top \Lambda_S z_S - \sigma B\sqrt{Cb_2} \lVert{\dot{\tilde e}}\rVert\norm{d_1 \psi}.
    \end{align*}
    Next, we introduce the $\hat \Lambda_G$ and $\hat \Lambda_S$ matrices. We have:
    \begin{align}
        z_G^\top \Lambda_G z_G &= z_G^\top \hat \Lambda_G z_G + \tfrac{1}{2}\sigma \tilde k_d(1 - \sqrt{CL})\lVert{\dot{\tilde e}}\rVert^2  \label{eq:g-hat-eq}\\
        z_S^\top \Lambda_S z_S &= z_S^\top \hat \Lambda_S z_S + \tfrac{\epsilon}{2}\norm{d_1 \psi}^2. \label{eq:s-hat-eq}
    \end{align}
    Define a cross-term vector $z_C = [\norm{d_1 \psi}; \lVert \dot{\tilde e} \rVert]$. Substituting \eqref{eq:g-hat-eq} and \eqref{eq:s-hat-eq} into the $\dot h$ bound and collecting cross-terms,
    \begin{align}
        \dot h &> - \alpha h_0 + z_G^\top \hat \Lambda_G z_G + \lambda z_S^\top \hat \Lambda_S z_S + z_C^\top \hat \Lambda_C z_C, \label{eq:cross-term-bound}
    \end{align}
    where $z_C^\top \hat \Lambda_C z_C$ consists of all of the $z_C$-terms assembled into a quadratic form, with 
    \begin{align*}
        \hat \Lambda_C = \begin{bmatrix}
            \frac{1}{2} \lambda \epsilon & -\frac{1}{2}\sigma B \sqrt{Cb_2}\\
            -\frac{1}{2}\sigma B \sqrt{Cb_2} & \frac{1}{2}\sigma \tilde k_d(1 - \sqrt{CL})
        \end{bmatrix}.
    \end{align*}
    For the given choices of $\delta, \tilde k_d, \epsilon, \sigma, \lambda$, we have $\hat \Lambda_G, \hat \Lambda_S \succ 0$ and $\hat \Lambda_C \succeq 0$. Using $\hat \Lambda_C \succeq 0$ to drop $z_C^\top \hat \Lambda_C z_C$ and $C_G, C_S$ to lower bound $\hat \Lambda_G, \hat \Lambda_S$, we lower bound all of \eqref{eq:cross-term-bound} by:
    \begin{align}
        \eqref{eq:cross-term-bound} &\geq - \alpha h_0 + C_G \norm{\dot{\tilde e}}^2 + \lambda C_S z_S^\top \begin{bmatrix}
            b_2 & \frac{\epsilon}{2}\\
            \frac{\epsilon}{2} & \frac{1}{2}
        \end{bmatrix} z_S. \label{eq:zS-Weps-bound}
    \end{align}
    As the $z_S$ quadratic form upper bounds $W_\epsilon$, we conclude:
    \begin{align*}
        \eqref{eq:zS-Weps-bound} &\geq - \alpha h_0 + C_G \norm{\dot{\tilde e}}^2 + \lambda C_S W_\epsilon\\
        &\geq -\alpha h_0 + \alpha \tfrac{\sigma}{2}\norm{\dot{\tilde e}}^2 + \alpha \lambda W_\epsilon = -\alpha h(v_s, v_g)
    \end{align*}
    Thus, the closed-loop satisfies $\dot h(v_s, v_g) > -\alpha h(v_s, v_g)$ on $\C$. This implies that $\sup_{F \in \F_{(s, g)}} \dot h(v_s, v_g, F) > -\alpha h(v_s, v_g)$ on $\C$, and that $h$ is a CBF for $\Sigma$ defined on $TS \times O$.
\end{proof}
Thus, provided each layer of the proposed architecture is sufficiently robust and performant, the full control system remains safe \textit{and} is certified by a closed-form CBF. The bound on $\alpha$ naturally suggests that the aggression of the kinematic vector field $\kappa$ is limited by the performance of the tracking control laws.

Although $h$ \textit{is} a CBF like any other, and therefore can be implemented directly in a CBF-QP safety filter, it is often more practical to leverage a hierarchical perspective in the use of $h$. Instead of using $h$ as an end-to-end CBF, one may first design a safe velocity controller $\kappa$ and track it using the proposed hierarchical control design. Theorem \ref{theo:h-cbf} provides conditions under which $h$ is a certificate of safety for the \textit{full} architecture, even if it is not used in an end-to-end safety filter. This use of $h$ is reflective of the common design practice in which a safe controller is designed for a simple system and tracked by a performant low-level controller.

%% file: conclusion.tex
\section{Conclusion}
In this work, we performed a constructive, hierarchical study of safety-critical control for underactuated systems. We introduced shape maps, providing precise conditions for the existence of a correspondence between input directions in $G$ and shapes in $S$. Using these maps, we developed a layered control architecture enabling a hierarchical CBF synthesis procedure. Future work includes generalization to a broader class of underactuated systems.